\newcommand{\Rmnum}[1]{\expandafter\@slowromancap\romannumeral #1@}
\newcommand{\diag}[1]{\mathrm{diag}\left\{{#1}\right\}}
\newcommand{\rb}[1]{\left({#1}\right)}
\newcommand{\sqb}[1]{\left[{#1}\right]}
\newcommand{\sqbs}[1]{[{#1}]}
\newcommand{\intv}[2]{\left[{#1}\; ; \; {#2}\right]}
\newcommand{\ol}[1]{\overline{#1}}
\newcommand{\ul}[1]{\underline{#1}}
\newcommand{\dbr}[1]{\langle {#1} \rangle}
\newcommand\imCMsym[4][\mathord]{%
	\DeclareFontFamily{U} {#2}{}
	\DeclareFontShape{U}{#2}{m}{n}{
		<-6> #25
		<6-7> #26
		<7-8> #27
		<8-9> #28
		<9-10> #29
		<10-12> #210
		<12-> #212}{}
	\DeclareSymbolFont{CM#2} {U} {#2}{m}{n}
	\DeclareMathSymbol{#4}{#1}{CM#2}{#3}
}
\renewcommand{\jmath}{\CMjmath}
\newtheorem{definition}{Definition}[section]
\newtheorem{remark}{Remark}[section]
\newtheorem{theorem}{Theorem}[section]
\newtheorem{proof}{Proof}[section]
\newcommand{\fod}{\mathcal{D}}
\tikzset{
	>=stealth',
	punkt/.style={
		rectangle,
		rounded corners,
		draw=black, very thick,
		minimum height=3cm,
		text centered},
	pil/.style={ ->, thick, shorten <=2pt, shorten >=2pt,}
}
\newdimen\xCoord
\newdimen\yCoord
\newcommand*{\ExtractCoordinate}[1]{\path (#1); \pgfgetlastxy{\xCoord}{\yCoord};}
\title{Verification and Reachability Analysis of Fractional-Order Differential Equations Using Interval Analysis}
\author{Andreas Rauh 
\institute{Lab-STICC\\
	ENSTA Bretagne\\
	29806 Brest, France}
\email{Andreas.Rauh@interval-methods.de}	
\and Julia Kersten
\institute{University of Rostock\\
Chair of Mechatronics\\
Justus-von-Liebig-Weg 6, D-18059 Rostock, Germany}
\email{Julia.Kersten@uni-rostock.de}
}
\begin{document}
\maketitle

\begin{abstract}
Interval approaches for the reachability analysis of initial value problems for sets of classical ordinary differential equations have been investigated and implemented by many researchers during the last decades. However, there exist numerous applications in computational science and engineering, where continuous-time system dynamics cannot be described adequately by integer-order differential equations. Especially in cases in which long-term memory effects are observed, fractional-order system representations are promising to describe the dynamics, on the one hand, with sufficient accuracy and, on the other hand, to limit the number of required state variables and parameters to a reasonable amount. Real-life applications for such fractional-order models can, among others, be found in the field of electrochemistry, where methods for impedance spectroscopy are typically used to identify fractional-order models for the charging/discharging behavior of batteries or for the dynamic relation between voltage and current in fuel cell systems if operated in a non-stationary state. This paper aims at presenting an iterative method for reachability analysis of fractional-order systems that is based on an interval arithmetic extension of Mittag-Leffler functions. An illustrating example, inspired by a low-order model of battery systems concludes this contribution.
\end{abstract}

\section{Introduction}
Floating-point simulation procedures for fractional-order systems are investigated in many current research projects. Such techniques involve --- depending on the system structure --- the numerically efficient and accurate evaluation of functions of the Mittag-Leffler type (basically, if linear models are of interest), the implementation of numerically efficient and robust simulation routines based on the Gr\"unwald-Letnikov differentiation operator (for both linear and nonlinear applications) as well as Laplace domain representations. The latter approach has led to the development of frequency domain-based procedures for linear dynamic systems that are especially applicable to system identification, path planning, and feedback control synthesis for fractional-order system models. A comprehensive software equally applicable for these tasks is \textsc{crone toolbox}, covering time-domain as well as frequency-domain techniques to solve the aforementioned tasks~\cite{MALTI2015769,Podlubny,Oustaloup,Gorenflo:2014,Garrappa_2015,Gorenflo_2002}.

Besides the control-oriented \textsc{crone toolbox}, various numerical schemes for the approximation of fractional-order derivatives and a corresponding numerical integration of systems of fractional-order differential equations have become useful for a variety of engineering applications. Here, especially the {Gr\"unwald-Letnikov operator} that is based on a temporal series expansion of the fractional derivative operator is well-known and serves as the basis for studying the dynamics of engineering applications modeled by fractional-order state equations. Such applications can be found, for example, in simulation, control design, its validation and verification, the design of state-of-charge estimators and model-based aging detection for battery systems. The solution of these tasks relies on 
\begin{enumerate}
	\item deriving state equations for an electric equivalent circuit model in which mostly capacitive elements are generalized by corresponding fractional-order constant phase elements, 
	\item identifying suitable system parameters by means of impedance spectroscopy, and finally
	\item evaluating the resulting system models numerically for a certain finite time horizon~\cite{ZOU2018286,wang,Rauh_IFAC_2020}.
\end{enumerate}

Here, impedance spectroscopy techniques (which are based on exciting the real-life system by harmonic input signals) help to reveal fractional-order behavior which is characterized by long-term memory effects as well as by frequency domain characteristics with amplitude responses that deviate from the classical integer multiples of $\pm20\,\mathrm{dB}$ per frequency decade and, respectively, deviate from integer multiples of $\pm\frac{\pi}{2}$ in the limit values of their corresponding phase response for large angular frequencies.

The long-term memory effect of fractional-order system models also leads to a drawback concerning numerical simulations based on temporal series expansions such as the Gr\"unwald-Letnikov operator. In such cases, it is necessary to account for a large number of summands in the series expansion to capture the long-term memory effects with sufficient accuracy. Moreover, restarting a time-domain evaluation of fractional-order differential equations purely on the basis of current state information is prone to significant truncation errors. Hence, a significant amount of work has also been invested in the derivation of suitable initialization procedures as well as in the quantification of the respective truncation errors.

In many engineering applications, the number of summands that are employed during a series-based simulation is, however, often restricted to a quite small value. This inherently leads to temporal truncation errors which are, unfortunately, in many cases not quantified in a rigorous manner. This fact becomes even more challenging when system models with uncertain parameters are accounted for. Assuming that for some application at hand the influence of truncation errors can be neglected, a naive generalization of the Gr\"unwald-Letnikov operator-based series expansions to systems with interval parameters would still inevitably lead to unreasonably wide interval enclosures. This is mostly caused by the mutual couplings between various state variables which --- if characterized by finitely large interval boxes in a multi-dimensional state space --- would have to be propagated through each of the summands in the series expansion and therefore provoke the wrapping effect that is also well-known from solving initial value problems for sets of classical ordinary differential equations~\cite{Ned:2006}. 

However, if it can be proven beforehand that the state equations under investigation possess the property of asymptotic stability, quasi-analytic representations of the enclosure of the systems' time responses can be defined by using Mittag-Leffler functions~\cite{Rauh_mmar_2019,Rauh_ECC_2020}. Those functions represent a generalization of classical exponential-type solutions which were originally developed for integer-order differential equations with the goal to suppress the wrapping effect in a computationally inexpensive manner for system models with a-priori proven stability properties.

For further general discussions about modeling dynamic systems by using fractional-order representations, their identification, as well as their potential use in control engineering tasks, the reader is referred to~\cite{Oustaloup}. Moreover, the paper~\cite{Rauh_Acta_2020} provides an overview of the state-of-the-art concerning already existing approaches for a verified simulation of fractional-order models. These include the definition of differential inclusions, the exploitation of cooperativity (i.e., monotonicity of solutions with respect to initial conditions) and positivity of state variables in simulation as well as state estimation, nonlinear time transformations to cast fractional-order models into equivalent integer-order representations, and, finally, also the extension of Picard iteration procedures~\cite{Lyons} to non-integer-order models.

Following the summary of required preliminaries and the problem formulation in Sec.~\ref{sec:prelim} as well as an exemplary comparison between low-dimensional fractional-order system models and significantly higher-dimensional approximations in terms of integer-order formulations in Sec.~\ref{sec:lin_frac_comparison}, a Mittag-Leffler type state enclosure technique derived in~\cite{Rauh_Acta_2020,Rauh_ECC_2020} for fractional-order models is reviewed in Sec.~\ref{sec:exponential} together with specific computational aspects for an interval-based evaluation of Mittag-Leffler type functions in Sec.~\ref{sec:interval_MLF}. In contrast to previous work of the authors, where scalar differential equations were in the focus of mostly theoretical applications, Sec.~\ref{sec:example} aims at applying the suggested iteration procedure to a non-scalar, practically motivated simplified dynamic battery model with uncertain parameters. Finally, conclusions and an outlook on future work are given in Sec.~\ref{sec:concl}.

\section{Preliminaries}\label{sec:prelim}
%
In this paper, a verified simulation routine for fractional-order system models that is based on a differential formulation of the Picard iteration is employed. It exploits Mittag-Leffler functions to represent guaranteed state enclosures for which the free parameters (describing a kind of decay rate of the initial conditions toward the equilibrium) are computed in an iterative manner.

Throughout this paper, Mittag-Leffler functions~\cite{Haubold,Gorenflo:2014,Garrappa_2015} are given by their two-parameter representation
\begin{equation}\label{eq:Mittag_Leffler}
E_{\nu,\beta}(\zeta) = \sum\limits_{i=0}^\infty
\frac{\zeta^i}{\Gamma\rb{\nu i + \beta}}
\end{equation}
with the general argument $\zeta \in \mathbb{C}$ as well as the parameters $\nu \in \mathbb{R}_+$ and $\beta \in \mathbb{R}$. Note, although this function is defined for each point $\zeta$ in the complex plane, we will restrict the discussion to the case of real-valued arguments and outputs of the Mittag-Leffler function.
In addition, assume that all dynamic systems in this paper are given in terms of explicit, autonomous, time-invariant\footnote{Note, the restriction to autonomous, time-invariant systems can be removed by the introduction of auxiliary state variables for the time argument as well as for time- and state-dependent expressions included in control inputs. Corresponding procedures, leading to an increase of the system dimension, were discussed, for example, in~\cite{Rauh_SCAN14} for the integer-order case.} state equations which are re-written according to Def.~\ref{def:quasi_lin} into a quasi-linear form. This reformulation enhances the efficiency of the numerical evaluation of state enclosures, especially with respect to suppressing the wrapping effect.

\begin{definition}[\label{def:quasi_lin}Quasi-linear fractional-order system model of Caputo type]
	A commensurate-order quasi-linear set of fractional-order differential equations of Caputo type~\cite{Oustaloup,Podlubny} is defined by
	\begin{equation}\label{eq:state_space_frac}
	\begin{split}
	{}_{t_0=0}\mathcal{D}_t^\nu\mathbf{x}(t) = 
	{\mathbf{x}}^{\rb{\nu}}(t) & = 
	{\mathbcal{f}}\rb{{\mathbf{x}}(t)} =: 
	{\mathbcal{A}}\rb{{\mathbf{x}}(t)} \cdot {\mathbf{x}}(t)~,~~
	{\mathbf{x}}(t) \in \mathbb{R}^n~,~~
	{\mathbcal{A}}\rb{{\mathbf{x}}(t)} \in \mathbb{R}^{n\times n}
	~,~~
	0 < \nu <1
	\enspace ,
	\end{split}
	\end{equation}
	where $\nu$ is the non-integer derivative order and ${\mathbcal{f}}\rb{{\mathbf{x}}(t)}$ the nonlinear right-hand side of the system model re-written into a quasi-linear form; initial conditions  $\mathbf{x}(t_0 = 0)$ are specified in terms of the interval bounds
	\begin{equation}\label{eq:init_x}
	\mathbf{x}(0) \in \sqb{\mathbf{x}}(0) = \intv{\ul{\mathbf{x}}(0)}{\ol{\mathbf{x}}(0)}
	\enspace .
	\end{equation}
\end{definition}

In~(\ref{eq:init_x}), the initial state vector ${\mathbf{x}}(0)$ is described by the interval representation $\sqb{\mathbf{x}}(0)$, where the inequalities ${\ul{{x}}_i(0)}\le{\ol{{x}}_i(0)}$ hold element-wise for each vector component $i \in \{1,\ldots,n\}$. 

\begin{definition}[\label{def:diag_dominant}Diagonally dominant model] Diagonally dominant quasi-linear system models are given by the state-space representation
	\begin{equation}\label{eq:diag_dom}
	\begin{split}
	{}_{0}\mathcal{D}_t^\nu\mathbf{z}(t) = 
	{\mathbf{z}}^{\rb{\nu}}(t) 
	& = \mathbf{f}\rb{{\mathbf{z}}(t)} = 
	\rb{{\mathbf{T}}^{-1} \cdot 
		{\mathbcal{A}}\rb{\mathbf{T} \cdot {\mathbf{z}}(t)} \cdot \mathbf{T}} \cdot {\mathbf{z}}(t)  =: 
	{\mathbf{A}}\rb{{\mathbf{z}}(t)} \cdot {\mathbf{z}}(t)
	\end{split}
	\end{equation}
	after a suitable similarity transformation
	\begin{equation}\label{eq:trafo}
	{\mathbf{x}}(t) = 
	{\mathbf{T}} \cdot {\mathbf{z}}(t)~,~~
	{\mathbf{T}} \in \mathbb{R}^{n \times n}~,~~
	{\mathbf{z}}(t) \in \mathbb{R}^{n}
	\end{equation}
	of the system model in Def.~\ref{def:quasi_lin}.
\end{definition}

\begin{remark}
	In this paper, we restrict ourselves to the case of real-valued similarity transformations in~(\ref{eq:trafo}). These transformations lead to the real-valued initial state enclosures 
	\begin{equation}\label{eq:init_cond_trafo}
	\mathbf{z}(0) \in \sqb{\mathbf{z}}(0) =
	\mathbf{T}^{-1} \cdot \sqb{\mathbf{x}}(0)
	\enspace .
	\end{equation}
	According to~\cite{Rauh_SCAN14,Rauh_mmar_2019}, also complex-valued similarity transformations are possible which are advantageous for the case of systems with conjugate-complex eigenvalues and, hence, oscillatory dynamics. For both the real- and complex-valued case with system models having an eigenvalue multiplicity of one, the transformation matrix ${\mathbf{T}}$ is composed of the eigenvectors of ${\mathbcal{A}}\rb{\mathbf{x}_\mathrm{m}}$, computed at the interval midpoint $\mathbf{x}_\mathrm{m} = \frac{1}{2}\cdot \rb{{\ul{\mathbf{x}}(0)}+{\ol{\mathbf{x}}(0)}}$. Transformations applicable to systems with higher eigenvalue multiplicities were derived in~\cite{Rauh_SCAN14} for dynamic models with integer-order derivatives.
\end{remark}

\begin{remark}
	Where it is necessary for a compact notation of the iteration procedure given in the following section, it is further assumed that a translation of the state vector has been performed prior to solving the considered simulation task (in accordance with Defs.~\ref{def:quasi_lin} and~\ref{def:diag_dominant}) so that the state trajectories converge to the origin of the state space if the dynamics are asymptotically stable.
\end{remark}

\section{Comparison Between Linear Fractional-Order Models and Classical Integer-Order Approximations}\label{sec:lin_frac_comparison}

To illustrate fundamental differences between linear fractional-order differential equations and integer-order system representations, consider the dynamic system model
\begin{equation}
y^{(0.5)} (t) = -y(t) + u(t)
\end{equation}
with the input signal $u(t)$ and the system output $y(t)$. As an example, $u(t)$ and $y(t)$ may represent the total current and the voltage drop, respectively, over a parallel connection of an Ohmic resistor and a constant phase element with a maximum phase shift of $-\frac{\pi}{4}$. Such elements appear typically in the domain of electrochemical impedance spectroscopy~\cite{ZOU2018286,wang}. In the frequency domain, these so-called Warburg elements are expressed by 
\begin{equation}\label{eq:ex_frac1}
F(\jmath\omega) = \frac{Y(\jmath\omega)}{U(\jmath\omega)}
= 
\frac{1}{1+(\jmath \omega)^{0.5}}
\enspace. 
\end{equation}

As mentioned in the introduction, parameter identification and numerical simulation of such elements are either possible by using specialized routines for fractional-order system models (as it is considered also in this paper) or by approximating them by finite-dimensional sets of ordinary differential equations. A possibility for doing so is based on firstly approximating the fractional differentiation operator by a suitable transfer function and subsequently transforming this system model back into the time domain, leading to the approximating state-space model
\begin{equation}\label{eq.approx}
\dot{\tilde{\mathbf{z}}}(t) = \mathbf{A} {\tilde{\mathbf{z}}}(t) + \mathbf{b} u(t)
\quad \text{with}
\quad 
y(t) \approx \mathbf{c}^T {\tilde{\mathbf{z}}}(t) + d u(t)
\enspace .
\end{equation}
If the recursive approximation routines developed by A.~Oustaloup, cf.~\cite{Khazali,Oustaloup,817385} as well as the freely available \textsc{Matlab} implementation~\cite{Chen}, are used for that purpose, the approximations in Fig.~\ref{fig:ex_nyqist} are obtained, where the frequency range $\omega \in \intv{0.01}{100}\,\mathrm{\frac{rad}{s}}$ over which the approximation is performed, is highlighted by the vertical dotted lines in Figs.~\ref{fig_ampl} and~\ref{fig_phase}. Here, the black dashed curves denote the exact fractional-order behavior, while the red and blue models correspond to approximations of~(\ref{eq:ex_frac1}) in terms of rational transfer functions of the orders $5$ (red) and $11$ (blue), respectively. 

Exemplarily, the numerical values for the approximating system matrices in~(\ref{eq.approx}) are given by
\begin{equation}
\begin{split}
\setcounter{MaxMatrixCols}{11}
\mathbf{A} &=
\scalebox{0.7}{$
\begin{bmatrix} -559.71 & -277.1 & -76.197 & -24.835 & -9.9316 & -2.4829 & -0.77609 & -0.29764 & -0.067652 & -0.017081 & -0.0039062\\ 256.0 & 0 & 0 & 0 & 0 & 0 & 0 & 0 & 0 & 0 & 0\\ 0 & 128.0 & 0 & 0 & 0 & 0 & 0 & 0 & 0 & 0 & 0\\ 0 & 0 & 32.0 & 0 & 0 & 0 & 0 & 0 & 0 & 0 & 0\\ 0 & 0 & 0 & 8.0 & 0 & 0 & 0 & 0 & 0 & 0 & 0\\ 0 & 0 & 0 & 0 & 4.0 & 0 & 0 & 0 & 0 & 0 & 0\\ 0 & 0 & 0 & 0 & 0 & 1.0 & 0 & 0 & 0 & 0 & 0\\ 0 & 0 & 0 & 0 & 0 & 0 & 0.25 & 0 & 0 & 0 & 0\\ 0 & 0 & 0 & 0 & 0 & 0 & 0 & 0.125 & 0 & 0 & 0\\ 0 & 0 & 0 & 0 & 0 & 0 & 0 & 0 & 0.03125 & 0 & 0\\ 0 & 0 & 0 & 0 & 0 & 0 & 0 & 0 & 0 & 0.0078125 & 0 \end{bmatrix}
$},
\\
\mathbf{b}& =
\begin{bmatrix} 8.0 & 0 & 0 & 0 & 0 & 0 & 0 & 0 & 0 & 0 & 0 \end{bmatrix}^T,\\
\mathbf{c}^T &=
\scalebox{0.8}{$
\begin{bmatrix} 1.7692 & 2.3998 & 1.3484 & 0.77563 & 0.48601 & 0.16983 & 0.066826 & 0.029657 & 0.0073522 & 0.0019502 & 0.00045835 \end{bmatrix}$},\\
d &= 0.030653
\enspace. 
\end{split}
\raisetag{5mm}
\end{equation}
By a comparison of the frequency responses and Nyquist plots for both chosen approximation orders in Fig.~\ref{fig:ex_nyqist}, it can be concluded that all state variables of the integer-order approximation given above have a significant contribution to the approximation of the fractional-order behavior. If lower approximation orders were used, the quality of approximation deteriorates which can be seen by the oscillating nature of the red curves in Fig.~\ref{fig:ex_nyqist}. Obviously, the advantage of working directly with the fractional-order model is the significantly smaller number of required parameters and state variables. 

\begin{figure}[htp]
	\centering
	\subfloat[{\label{fig_ampl}}Magnitude response.]{\resizebox{0.32 \linewidth}{!}{{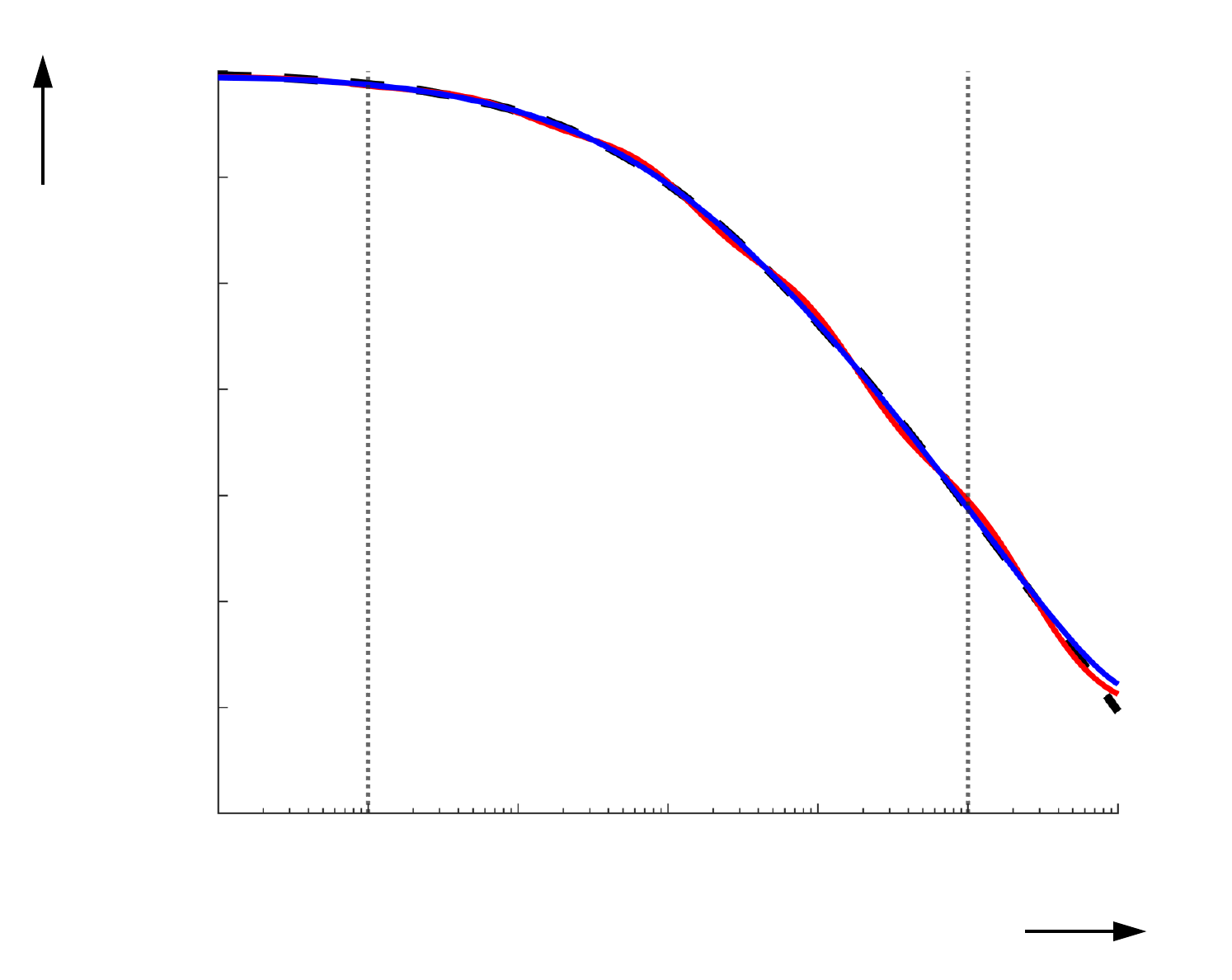}}}
	\hfill
	\subfloat[{\label{fig_phase}}Phase response.]{\resizebox{0.32 \linewidth}{!}{{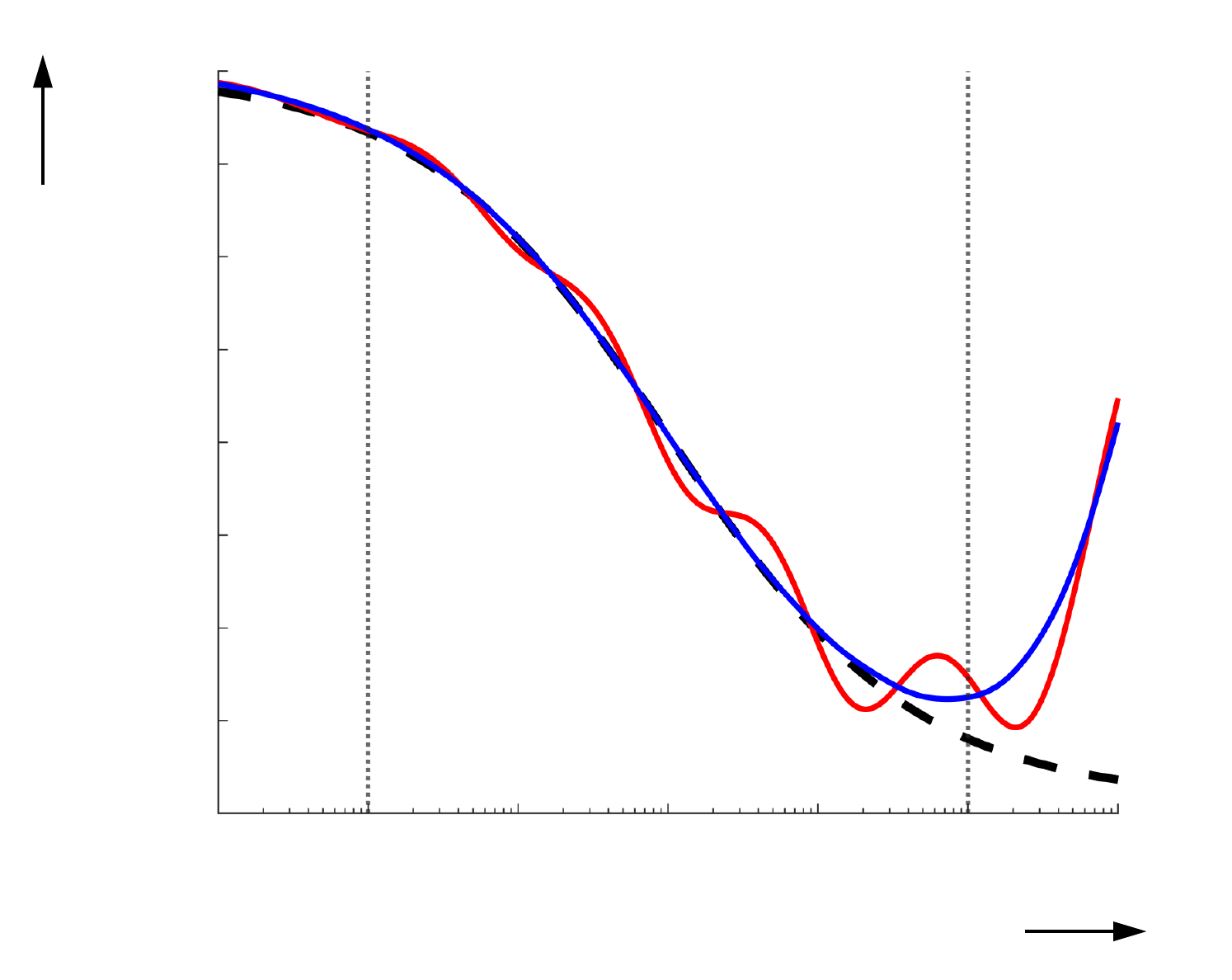}}}
	\hfill
	\subfloat[{\label{fig_nyquist}}Nyquist plot with selected frequencies  $\ul{\omega}=0.01\mathrm{\frac{rad}{s}}$ ($\circ$) and $\ol{\omega}=100\mathrm{\frac{rad}{s}}$ ($\diamond$).]{\resizebox{0.32 \linewidth}{!}{{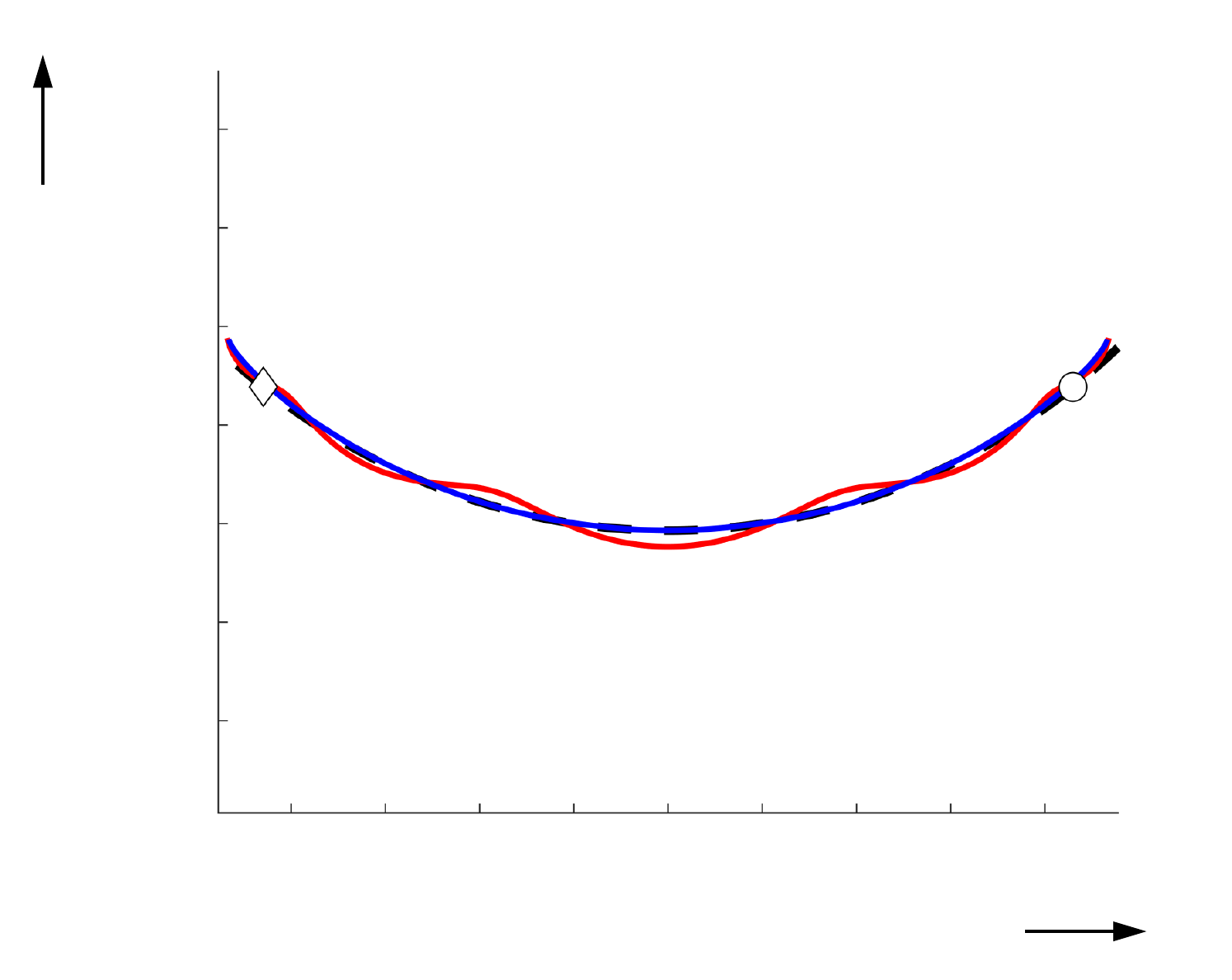}}}
	\caption{{\label{fig:ex_nyqist}}Frequency domain analysis of the fractional-order model~(\ref{eq:state_space_frac}) in comparison with different integer-order approximations (black dashed: fractional-order model; {\color{red}{red solid}}: 5th order approximation; {\color{blue}{blue solid}}: 11th order approximation).}
\end{figure}

\section{Mittag-Leffler Type State Enclosures for Fractional-Order Differential Equations}\label{sec:exponential}
The focus of this section is the summary of a Mittag-Leffler type state enclosure technique for sets of commensurate fractional-order models. The fundamental iteration summarized in the following Theorem~\ref{thm:MLF_encl} was first published by the authors in~\cite{Rauh_mmar_2019} and~\cite{Rauh_ECC_2020} with a further in-depth discussion of the infinite-horizon memory property that becomes crucial as soon as the integration time horizon is divided into several, finitely long temporal subslices in~\cite{Rauh_Acta_2020}.

\begin{definition}[\label{def:MLF_encl}Mittag-Leffler type state enclosure]
	The time-dependent expression
	\begin{equation}\label{eq:sol_MLF}	
	{\mathbf{z}}^*(t) \in 
	\mathbf{E}_{\nu,1}{\left(\left[\boldsymbol\Lambda\right] \cdot t^\nu \right)} \cdot
	\left[{\mathbf{z}}_{e}\right]\left(0\right)
	~,~~
	\mathbf{E}_{\nu,1}{\left(\left[\boldsymbol\Lambda\right] \cdot t^\nu \right)}
	:= \diag{{E}_{\nu,1}{\left(\left[\lambda_i\right] \cdot t^\nu \right)}}
	~,~~
	\sqb{{\mathbf{z}}_{e}}\left(0\right) = \sqb{{{\mathbf{z}}}_0}
	\end{equation}with the diagonal parameter matrix
	$
	\left[\boldsymbol\Lambda\right] := \diag{\left[\lambda_i\right]}$, $i\in \{1,\ldots,n\}$,
	is denoted as a \emph{verified Mittag-Leffler type state enclosure} for the system model~(\ref{eq:diag_dom}) with~(\ref{eq:init_cond_trafo}) if it is determined according to Theorem~\ref{thm:MLF_encl}.
\end{definition}

\begin{theorem}[\cite{Rauh_ECC_2020,Rauh_Acta_2020,Rauh_mmar_2019}~\label{thm:MLF_encl}Iteration for Mittag-Leffler type enclosures]
	The Mittag-Leffler type state enclosure~(\ref{eq:sol_MLF}) is guaranteed to contain the set of all reachable states ${\mathbf{z}}^*(T)$ at the point of time $t=T>0$ according to
	\begin{equation}
	{\mathbf{z}}^*(T) \in 
	\mathbf{E}_{\nu,1}{\left(\left[\boldsymbol\Lambda\right] \cdot T^\nu \right)} \cdot
	\left[{\mathbf{z}}_{e}\right]\left(0\right)\, ,~
	{\mathbf{z}}^*(t) \in \mathbb{R}^{n}
	\enspace ,
	\end{equation}
	if $\left[\boldsymbol\Lambda\right]$ is set to the outcome of the converging iteration 
	\begin{equation}\label{eq:iteration_MLF}
	\left[{\lambda}_i\right]^{\dbr{\kappa +1}}
	:=
	\frac{
		{{f}_i}\left(
		\mathbf{E}_{\nu,1}{\left(\left[\boldsymbol\Lambda\right]^{\dbr{\kappa}} \cdot \sqb{t}^\nu\right)} \cdot
		\left[{\mathbf{z}}_{e}\right]\left(0\right)
		\right)}
	{
		{E}_{\nu,1}{\left(\left[\lambda_i\right]^{\dbr{\kappa}} \cdot \sqb{t}^\nu\right)} \cdot
		\left[{{z}}_{e,i}\right]\left(0\right)
	}
	\enspace , 
	\end{equation}
	$i\in \{1,\ldots,n\}$, with the prediction horizon $\sqb{t} = \intv{0}{T}$.
\end{theorem}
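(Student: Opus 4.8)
\emph{Proof plan.} The plan is to reduce the coupled nonlinear system~(\ref{eq:diag_dom}) to $n$ decoupled scalar fractional-order problems and to exploit that the single-parameter Mittag-Leffler function is the Caputo eigenfunction. The scalar building block I would establish first is that the initial value problem ${}_{0}\mathcal{D}_t^\nu z(t) = \lambda z(t)$, $z(0)=z_0$, has the exact solution $z(t) = E_{\nu,1}(\lambda t^\nu)\,z_0$; this follows by applying ${}_{0}\mathcal{D}_t^\nu$ term-by-term to the series~(\ref{eq:Mittag_Leffler}), which reproduces the series up to the factor $\lambda$. Consequently, the ansatz~(\ref{eq:sol_MLF}) with a \emph{constant} diagonal matrix $\boldsymbol\Lambda$ solves the decoupled system $z_i^{(\nu)} = \lambda_i z_i$ exactly, component by component.

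Since $\mathbf{A}(\mathbf{z})$ is state-dependent, no single constant $\lambda_i$ reproduces the true dynamics $z_i^{(\nu)}(t) = f_i(\mathbf{z}(t))$. The next step is to replace the constant by an interval $\sqb{\lambda_i}$ chosen so that the effective decay rate satisfies the inclusion $f_i(\mathbf{z}(t)) \in \sqb{\lambda_i}\cdot z_i(t)$ for every $t\in\sqb{t}=\intv{0}{T}$ and every state contained in the candidate enclosure. I claim this inclusion is precisely what the iteration~(\ref{eq:iteration_MLF}) enforces: its right-hand side is the interval evaluation of the quotient $f_i(\sqb{\mathbf{z}}(t))/\sqb{z_i}(t)$, where $\sqb{\mathbf{z}}(t)=\mathbf{E}_{\nu,1}(\sqb{\boldsymbol\Lambda}^{\dbr{\kappa}}\sqb{t}^\nu)\sqb{\mathbf{z}_e}(0)$ is the enclosure generated by the current iterate over the whole horizon.

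The verification step then reads as follows. A self-enclosing iterate $\sqb{\boldsymbol\Lambda}^\star$ of~(\ref{eq:iteration_MLF}), i.e.\ one for which the update maps $\sqb{\boldsymbol\Lambda}^\star$ into itself, is self-consistent: the enclosure it generates yields effective rates that again lie in $\sqb{\boldsymbol\Lambda}^\star$. By the fundamental inclusion property of interval arithmetic, the update map is inclusion-monotone, so self-consistency implies that the interval linear systems $z_i^{(\nu)}\in\sqb{\lambda_i}^\star z_i$ overestimate the nonlinear dynamics throughout $\intv{0}{T}$. Invoking a fractional comparison principle for Caputo systems together with the monotone dependence of $E_{\nu,1}(\lambda t^\nu)$ on $\lambda$, I would conclude that the true trajectory, started from any initial value in $\sqb{\mathbf{z}_0}$, stays sandwiched between the two extremal Mittag-Leffler solutions associated with $\ul{\lambda}_i^\star$ and $\ol{\lambda}_i^\star$, and hence remains inside~(\ref{eq:sol_MLF}) for all $t\in\intv{0}{T}$, in particular at $t=T$. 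The existence of such a self-enclosing iterate, and convergence of~(\ref{eq:iteration_MLF}) toward it, I would obtain by an $\varepsilon$-inflation argument combined with inclusion monotonicity, or alternatively by a contraction estimate in the Hausdorff metric on interval vectors.

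The step I expect to be the main obstacle is the fractional comparison principle invoked above. Because the Caputo derivative is nonlocal and retains the entire memory of the trajectory, the elementary integer-order reasoning---that a solution cannot cross a barrier toward which its derivative points inward---does not transfer directly, and the monotonicity of $E_{\nu,1}$ with respect to $\lambda$ for negative arguments needs separate justification. One must instead rely on a Gronwall-type monotonicity estimate for Caputo systems and verify that the diagonalizing transformation~(\ref{eq:trafo}) has genuinely decoupled the components, so that the scalar comparison applies component-wise without reintroducing the wrapping effect through the off-diagonal couplings that have been absorbed into $f_i$.
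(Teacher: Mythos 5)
Your plan follows the same skeleton as the paper's proof: both start from the scalar Caputo eigenfunction property $z(t)=E_{\nu,1}(\lambda t^\nu)z(0)$, both read the iteration~(\ref{eq:iteration_MLF}) as the interval quotient $f_i(\sqb{\mathbf{z}_e}(\sqb{t}))/\sqb{z_{e,i}}(\sqb{t})$, and both close the argument through a self-consistency (fixed-point) condition on the converged $\sqb{\boldsymbol\Lambda}$. Where you diverge is in the justification of the final enclosure step: the paper substitutes the ansatz into the \emph{differential formulation of the Picard iteration}~(\ref{eq:exp1_fde}), assumes the nestedness $\sqb{\mathbf{z}_e}^{\dbr{\kappa+1}}(\sqb{t})\subseteq\sqb{\mathbf{z}_e}^{\dbr{\kappa}}(\sqb{t})$ due to convergence, and defers the actual containment argument to the cited references, whereas you invoke an explicit fractional comparison principle plus monotonicity of $E_{\nu,1}(\lambda t^\nu)$ in $\lambda$ to sandwich the true trajectory between the extremal Mittag-Leffler solutions. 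The obstacle you flag --- that derivative-inclusion over the horizon does not trivially imply trajectory-inclusion for a nonlocal Caputo operator, unlike in the integer-order barrier argument --- is precisely the point the paper's proof passes over silently by appeal to the Picard framework of~\cite{Rauh_Acta_2020,Rauh_ECC_2020}; so your route is not wrong, but it makes explicit (and would have to actually supply) a comparison lemma that the paper treats as already established elsewhere. The trade-off is that your version would be self-contained if the comparison principle and the existence of a self-enclosing iterate were proved, while the paper's version is shorter because it inherits both from the differential Picard iteration machinery of the prior work.
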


\begin{proof}
	According to~\cite{Dorjgotov,Ghosh}, the exact solution of a linear fractional-order differential equation 
	\begin{equation}\label{eq:sol_frac1}
	z^{(\nu)}(t) = \lambda \cdot z(t)
	\end{equation}
	of Caputo type, for which solely initial state information at the point $t=0$ is available, is given by the analytic expression
	\begin{equation}\label{eq:sol_frac2}
	z(t) = E_{\nu,1}\rb{\lambda t^\nu} \cdot z(0)
	\enspace .
	\end{equation}
	The relation~(\ref{eq:sol_frac2}) serves as an ansatz for describing the verified state enclosures according to Def.~\ref{def:MLF_encl} by substituting it according to~\cite{Rauh_Acta_2020,Rauh_ECC_2020} into 
	the differential formulation of the Picard iteration
	\begin{equation}\label{eq:exp1_fde}
	\begin{split}
	{\mathbf{z}}^{\rb{\nu}}(t) & \in
	\rb{\left[\boldsymbol\Lambda\right]^{\dbr{\kappa +1}}} \cdot
	\mathbf{E}_{\nu,1}{\left(\left[\boldsymbol\Lambda\right]^{\dbr{\kappa +1}} \cdot t^\nu \right)}
	\cdot
	\left[{\mathbf{z}}_{e}\right]\left(0\right)\\
	& =
	{\mathbf{f}}\left(
	\mathbf{E}_{\nu,1}{\left(\left[\boldsymbol\Lambda\right]^{\dbr{\kappa}} \cdot t^\nu \right)} \cdot
	\left[{\mathbf{z}}_{e}\right]\left(0\right)
	\right) =:
	{\mathbf{f}}\left(\left[{\mathbf{z}}_{e}
	\right]^{\dbr{\kappa}}(\sqb{t})
	\right)
	\enspace  .
	\end{split}
	\raisetag{13mm}
	\end{equation} 
	
\noindent Overapproximating the Mittag-Leffler type state enclosure 
	$\mathbf{E}_{\nu,1}{\left(\left[\boldsymbol\Lambda\right]^{\dbr{\kappa+1}} \cdot t^\nu \right)}$ in the iteration step \mbox{$\kappa+1$} by the enclosure ${\left[{\mathbf{z}}_{e}
		\right]^{\dbr{\kappa}}(\sqb{t})}$ obtained in the previous iteration step~ $\kappa$ --- under the assumption
	that \linebreak \mbox{${\left[{\mathbf{z}}_{e}
		\right]^{\dbr{\kappa+1}}(\sqb{t})} \subseteq {\left[{\mathbf{z}}_{e}
		\right]^{\dbr{\kappa}}(\sqb{t})}$} holds due to convergence of the iteration formula ---
	 on the first line of~(\ref{eq:exp1_fde})
	leads to
	\begin{equation}
	{\diag{\sqbs{\tilde{\lambda}_i}^{\dbr{\kappa +1}}}}  \cdot
	{\left[{\mathbf{z}}_{e}
		\right]^{\dbr{\kappa}}(\sqb{t})} =
	{\mathbf{f}}\left(\left[{\mathbf{z}}_{e}
	\right]^{\dbr{\kappa}}(\sqb{t})
	\right)
	\enspace .
	\end{equation} 
	Solving this expression for $\sqbs{\tilde{\lambda}_i}^{\dbr{\kappa +1}}$ with subsequently renaming this parameter into $\left[{\lambda}_i\right]^{\dbr{\kappa +1}}$ completes the proof of Theorem~\ref{thm:MLF_encl}.
\end{proof}

\begin{remark}
	Quasi-linear state-space representations of fractional-order differential equations in diagonally dominant form according to Def.~\ref{def:diag_dominant} can be simulated efficiently by the symbolically simplified iteration scheme
	\begin{equation}\label{eq:MLF_mod}
	\left[{\lambda}_i\right]^{\dbr{\kappa +1}}
	:= a_{ii}\rb{\sqb{{\mathbf{z}}_{e}}^{\dbr{\kappa}}\left(\sqb{t}\right)}
	+ \sum\limits_{
		{j=1}\atop{j \neq i}
	}^n\left\{
	a_{ij}\rb{\sqb{{\mathbf{z}}_{e}}^{\dbr{\kappa}}\left(\sqb{t}\right)} \cdot
	\frac{E_{\nu,1}\rb{\left[\lambda_j\right]^{\dbr{\kappa}}\cdot \sqb{t}^\nu}}
	{E_{\nu,1}\rb{\left[\lambda_i\right]^{\dbr{\kappa}}\cdot \sqb{t}^\nu}}
	\cdot
	\frac{
		\sqb{{{z}}_{e,j}}\left(0\right)
	}
	{
		\sqb{{{z}}_{e,i}}\left(0\right)
	}
	\right\} \enspace.
	\end{equation}
	In~(\ref{eq:MLF_mod}), the quotient of two Mittag-Leffler functions cannot be simplified further in the general case. This imposes further restrictions on the numerical evaluation of this iteration scheme by means of interval techniques as shown in the following section.
\end{remark}

\begin{remark}
	For the order $\nu=1$, the iteration formula in Theorem~\ref{thm:MLF_encl} becomes identical to the solution published, for example, in~\cite{Rauh_SCAN14} due to the identity $E_{1,1}(z) \equiv e^z$.
\end{remark}

\section{Evaluating Mittag-Leffler Functions for Interval Arguments}\label{sec:interval_MLF}
As described in~\cite{Haubold}, it is possible to determine rough outer bounds for the range of a Mittag-Leffler function that is evaluated over a finitely large real-valued interval argument. These bounds result from the fact that the Mittag-Leffler function can be interpreted as a continuous interpolation between Gaussian (exponential) and Lorentzian (rational) functions. Due to this property, the bounds 
\begin{equation}\label{eq:bound_ml}
\exp\rb{-\zeta} < E_{\nu,1}\rb{-\zeta} \le \frac{1}{1+\zeta} 
\qquad \text{and} \qquad
\exp\rb{-\zeta^2} < E_{\nu,1}\rb{-\zeta^2} \le \frac{1}{1+\zeta^2}
\end{equation}
can be determined for $\zeta \ge 0$. A visualization of these bounds in terms of the dependency on the parameter range $\nu \in \intv{0}{1}$ for the fractional differentiation order can be found in~\cite{Rauh_Acta_2020}.
%
%

Because these bounds are usually too conservative for the interval-based evaluation of the iteration procedure presented in the previous section, floating point evaluations of the Mittag-Leffler function using the \textsc{Matlab} implementation of R.~Garrappa~\cite{Garrappa_2015} are extended in the following to obtain tighter guaranteed interval bounds for the case of real-valued arguments to which this paper is restricted.

\subsection{Interval Evaluation of the Mittag-Leffler Function with Real Arguments}
\begin{theorem}[\cite{Rauh_ECC_2020,Rauh_Acta_2020}~\label{thm:MLF_int}Interval bounds for the Mittag-Leffler  function with real arguments]\mbox{$~~$}\newline
	Interval evaluations of the two-parameter Mittag-Leffler function with real-valued arguments \linebreak$z \in \sqb{z} = \intv{\ul{z}}{\ol{z}}$ are given by
	\begin{equation}
	\begin{split}
	E_{\nu,\beta}\rb{\sqb{z}} & \in \sqb{{E}^*_{\nu,\beta}\rb{\sqb{z}}} = 
	\sqb{\tilde{E}_{\nu,\beta}\rb{\sqb{z}}} + 
	\frac{\epsilon}{1+  \epsilon} \cdot 
	\rb{1+\left|\sqb{\tilde{E}_{\nu,\beta}\rb{\sqb{z}}}\right|} \cdot
	\intv{-1}{1}
	\end{split}
	\raisetag{13mm}
	\end{equation}
	with the tolerance value $\epsilon>0$ of a floating point evaluation of~(\ref{eq:Mittag_Leffler}) and the interval definition
	\begin{equation}\label{eq:MLF_int}
	\sqb{\tilde{E}_{\nu,\beta}\rb{\sqb{z}}} = \intv{\bigtriangledown \tilde{E}_{\nu,\beta}\rb{\ul{z}}}{\bigtriangleup \tilde{E}_{\nu,\beta}\rb{\ol{z}}}
	\end{equation}
	in which $\bigtriangledown$ and $\bigtriangleup$ denote switchings of the rounding mode toward minus and plus infinity, respectively, in a (CPU-based) floating point evaluation.
\end{theorem}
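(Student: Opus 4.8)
The plan is to assemble the enclosure from three essentially independent ingredients: monotonicity of the Mittag-Leffler function in a real argument, a floating-point error model for its numerical evaluation, and the conservatism of directed rounding. First I would show that, over the parameter range relevant here ($0<\nu<1$ and $\beta\ge\nu$, in particular the case $\beta=1$ driving the iteration~(\ref{eq:iteration_MLF})), the map $z\mapsto E_{\nu,\beta}\rb{z}$ is monotonically increasing on all of $\mathbb{R}$. For $z\ge 0$ this is immediate from the series~(\ref{eq:Mittag_Leffler}): its coefficients $1/\Gamma\rb{\nu i+\beta}$ are strictly positive, so the termwise derivative is a sum of non-negative terms. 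For $z\le 0$ it follows from the classical complete monotonicity of $E_{\nu,\beta}\rb{-\zeta}$ in $\zeta\ge 0$. Monotonicity then forces the exact range of $E_{\nu,\beta}$ over $\sqb{z}=\intv{\ul{z}}{\ol{z}}$ to be attained at the endpoints and ordered as $\intv{E_{\nu,\beta}\rb{\ul{z}}}{E_{\nu,\beta}\rb{\ol{z}}}$, which is exactly the structure claimed in~(\ref{eq:MLF_int}).

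Next I would account for the two endpoint values being available only through the floating-point routine, which returns approximations $\tilde{E}$ obeying a guaranteed mixed absolute-plus-relative tolerance of the form $\left|E_{\nu,\beta}-\tilde{E}\right|\le\epsilon\rb{1+\left|E_{\nu,\beta}\right|}$. Since only the computed quantity $\tilde{E}$ is known, I would re-express this bound self-consistently in terms of $\tilde{E}$: inserting $\left|E_{\nu,\beta}\right|\le\left|\tilde{E}\right|+\left|E_{\nu,\beta}-\tilde{E}\right|$ into the tolerance and solving the resulting inequality for the deviation yields an enclosure radius of $\frac{\epsilon}{1+\epsilon}\rb{1+\left|\tilde{E}\right|}$, precisely the inflation term of the theorem. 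Taking the interval magnitude $\left|\sqb{\tilde{E}_{\nu,\beta}\rb{\sqb{z}}}\right|$, which dominates the magnitude at both endpoints simultaneously, then lets a single symmetric correction $\frac{\epsilon}{1+\epsilon}\rb{1+\left|\sqb{\tilde{E}_{\nu,\beta}\rb{\sqb{z}}}\right|}\cdot\intv{-1}{1}$ cover both endpoints at once.

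Finally I would bridge the idealized real-number endpoint evaluations and the actual machine arithmetic by invoking the directed-rounding operators $\bigtriangledown$ and $\bigtriangleup$: rounding $\tilde{E}_{\nu,\beta}\rb{\ul{z}}$ toward $-\infty$ and $\tilde{E}_{\nu,\beta}\rb{\ol{z}}$ toward $+\infty$ guarantees that the machine-representable interval~(\ref{eq:MLF_int}) outer-encloses the (idealized) routine outputs, so that the final rounding cannot violate containment. Combining the three pieces --- the exact range pinned down by monotonicity, the approximation error absorbed by the $\epsilon$-inflation, and the representation error absorbed by directed rounding --- yields the asserted guaranteed enclosure $\sqb{E^*_{\nu,\beta}\rb{\sqb{z}}}$.

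I expect the decisive difficulty to lie in the second step rather than the first or third, which are standard. The inflation constant hinges on using the correct error model, and the sign bookkeeping is delicate: the self-consistent substitution produces the factor $\frac{\epsilon}{1+\epsilon}$ only in the direction in which the routine does not underestimate the magnitude of $E_{\nu,\beta}$, whereas the opposite direction would a priori call for the larger factor $\frac{\epsilon}{1-\epsilon}$. The careful point is therefore to argue that the symmetric term $\cdot\,\intv{-1}{1}$, together with the actual accuracy certified by the underlying implementation, genuinely captures the two-sided worst case, so that the stated tolerance $\epsilon$ indeed matches the absolute-plus-relative model assumed in the derivation.
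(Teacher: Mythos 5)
The paper itself contains no proof of Theorem~\ref{thm:MLF_int}: it is imported verbatim from \cite{Rauh_ECC_2020,Rauh_Acta_2020}, so there is no internal argument to compare yours against. Your three-part architecture --- monotonicity of $z\mapsto E_{\nu,\beta}\rb{z}$ pinning the exact range to the endpoints, a floating-point error model absorbing the truncation error of Garrappa's routine, and directed rounding absorbing representation error --- is the natural reading of the statement and is consistent with how the paper frames the surrounding material (the subsequent subsection defers exactly the monotonicity properties to the same references). The first and third ingredients are sound, with the minor caveat that monotonicity on the negative half-line needs $\beta\ge\nu$ (complete monotonicity of $E_{\nu,\beta}\rb{-\zeta}$), a restriction you state but which the theorem's wording for general $\beta$ does not; without it, the endpoint formula~(\ref{eq:MLF_int}) would not even be an enclosure of the range.

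The genuine gap is in your second step, and it is not merely the ``delicate sign bookkeeping'' you postpone to the closing paragraph: the algebra as written does not produce the theorem's constant. Writing $d=\left|E_{\nu,\beta}-\tilde{E}\right|$ and substituting $\left|E_{\nu,\beta}\right|\le\left|\tilde{E}\right|+d$ into the assumed tolerance $d\le\epsilon\rb{1+\left|E_{\nu,\beta}\right|}$ gives $d\rb{1-\epsilon}\le\epsilon\rb{1+|\tilde{E}|}$, i.e.\ the radius $\frac{\epsilon}{1-\epsilon}\rb{1+|\tilde{E}|}$ --- not $\frac{\epsilon}{1+\epsilon}\rb{1+|\tilde{E}|}$ as you claim. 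The smaller factor $\frac{\epsilon}{1+\epsilon}$ arises only from the reverse substitution $\left|E_{\nu,\beta}\right|\ge|\tilde{E}|-d$, which shows that a radius of $\frac{\epsilon}{1+\epsilon}\rb{1+|\tilde{E}|}$ is \emph{consistent with} the tolerance, not that it is \emph{implied by} it; indeed $d\le\frac{\epsilon}{1+\epsilon}\rb{1+|\tilde{E}|}$ is strictly stronger than $d\le\epsilon\rb{1+\left|E_{\nu,\beta}\right|}$. Since $\frac{\epsilon}{1+\epsilon}<\epsilon<\frac{\epsilon}{1-\epsilon}$, no self-consistent rewriting of an absolute-plus-relative bound of the form $d\le\epsilon\rb{1+|\cdot|}$ (whether referenced to the true or to the computed value) yields the stated inflation; your argument therefore proves the enclosure only with the larger radius $\frac{\epsilon}{1-\epsilon}\rb{1+|\tilde{E}|}$. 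Your own final paragraph essentially concedes this but leaves the decisive point --- that the implementation's certified accuracy is strong enough to justify $\frac{\epsilon}{1+\epsilon}$ in both directions --- as an unproven assumption, so the theorem as stated is not established.
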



\subsection{Exploitation of Monotonicity in Interval Evaluations of the Mittag-Leffler Function}
To reduce overestimation in the interval evaluation of the iteration procedure according to Theorem~\ref{thm:MLF_encl}, monotonicity properties of the Mittag-Leffer function need to be accounted for with respect to the time $t$, the solution parameters $\lambda_i$, and the derivative order $\nu$. For a description of all corresponding monotonicity properties, the reader is referred to~\cite{Rauh_Acta_2020} and~\cite{Rauh_ECC_2020}.

\subsection{Bounding the Temporal Truncation Error due to the Infinite Memory Property of Fractional-Order Systems}\label{sec:discr_errors}
Fractional-order systems are characterized by an infinite memory of previous states~\cite{Podlubny,Oustaloup}. Therefore, the restart of the simulation of a dynamic system according to Def.~\ref{def:diag_dominant} at some point of time $t=t_{k+1}$ does not only have to account for state information $\mathbf{z}(t_{k+1})$ computed by a simulation originally initiated at some previous point $t_k < t_{k+1}$, but also the effect of a temporal truncation error that can be expressed by component-wise bounds originating from a replacement of the fractional-order derivative of order $\nu$ with a memory start at $t=t_k$ by a new starting point $t=t_{k+1}$. The corresponding derivative operators are subsequently denoted by ${}_{t_k}\mathcal{D}_t^\nu\mathbf{z}(t)$ and ${}_{t_{k+1}}\mathcal{D}_t^\nu\mathbf{z}(t)$, respectively. Note, the following Theorems~\ref{thm:mu} and~\ref{thm:contractor} are generalizations of those given in~\cite{Rauh_ECC_2020,Rauh_Acta_2020}, focusing especially on non-scalar applications.

\begin{theorem}[\label{thm:mu}Bounds for temporal truncation errors]
	Resetting the initial point of time of the integration of fractional-order models defined in Def.~\ref{def:diag_dominant} based on Theorem~\ref{thm:MLF_encl} after completion of a time interval of length $T$ requires the inflation of the right-hand side of the state equations by the symmetric interval $\intv{-\boldsymbol{\mu}}{\boldsymbol{\mu}}$ at the point $t_k+T$ with 
	\begin{equation}
	\boldsymbol{\mu} :=
	\frac{{\boldsymbol{\mathbcal{Z}}} \cdot  \rb{t_k+T}^{-\nu}}{\left|\Gamma(1-\nu)\right|}
	\end{equation}
	and the component-wise defined supremum of the set of reachable states
	\begin{equation}\label{eq:sup_States}
	\mathcal{Z}_i = \sup\limits_{t \in \intv{t_0}{t_{k+1}}} \left|z_i(t)\right|
	\enspace .
	\end{equation}
	The re-initialized initial value problem is then given by
	\begin{equation}\label{eq:new_ivp}
	{}_{t_k+T}\mathcal{D}_t^\nu\mathbf{z}(t) = 
	\mathbf{z}^{\rb{\nu}}(t) = 
	\mathbf{f}\rb{\mathbf{z}(t)} + \intv{-\boldsymbol{\mu}}{\boldsymbol{\mu}}
	=:\tilde{\mathbf{f}}\rb{\mathbf{z}(t)}
	\end{equation}
	with the initial state enclosure ${\mathbf{z}}(t_{k}+T) \in \sqb{\mathbf{z}}(t_{k}+T)$ resulting form the solution of 
	\begin{equation}
	{}_{t_0}\mathcal{D}_t^\nu\mathbf{z}(t) = 
	\mathbf{z}^{\rb{\nu}}(t) = 
	\mathbf{f}\rb{\mathbf{z}(t)}
	\quad 
	\text{with} 
	\quad 
	{\mathbf{z}}(t_{0}) \in \sqb{\mathbf{z}}(t_{0})~,~~t_0=0
	\enspace .
	\end{equation}
\end{theorem}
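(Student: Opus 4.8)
The plan is to read the temporal truncation error as the difference between two Caputo operators applied to the same trajectory $\mathbf z$: the one that keeps the full memory of the original problem and the one whose lower memory limit has been advanced to the new starting point $t_k+T$. For $0<\nu<1$ the $i$-th component of the Caputo derivative with memory start $a$ is $\frac{1}{\Gamma(1-\nu)}\int_a^t (t-\tau)^{-\nu}\dot z_i(\tau)\,d\tau$, so subtracting the reset operator from the original one cancels the shared tail and leaves the neglected memory contribution $\frac{1}{\Gamma(1-\nu)}\int_{t_0}^{t_k+T}(t-\tau)^{-\nu}\dot z_i(\tau)\,d\tau$. This is precisely the quantity that the re-initialised problem~\eqref{eq:new_ivp} cannot reproduce on its own and that therefore has to be absorbed into the right-hand side.

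The second step is to turn this history term into an a priori bound that depends only on quantities we control, namely the state magnitude rather than its rate of change. I would estimate $|z_i(\tau)|$ on the whole elapsed interval by the envelope $\mathcal{Z}_i=\sup_{t\in\intv{t_0}{t_{k+1}}}|z_i(t)|$ and replace the true, sign-varying history by this worst-case constant. The decisive observation is that the fractional derivative of a constant $c$ with respect to the lower limit $t_0=0$ is $\tfrac{c\,t^{-\nu}}{\Gamma(1-\nu)}$; evaluating the resulting bound at the reset instant $t=t_k+T$ therefore yields exactly $\mu_i=\tfrac{\mathcal{Z}_i\,(t_k+T)^{-\nu}}{|\Gamma(1-\nu)|}$, which is the stated component of $\boldsymbol\mu$ (note that $\Gamma(1-\nu)>0$ for $0<\nu<1$, so the modulus is cosmetic). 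Since the sign of the neglected memory integral is not known in advance, the bound must enter symmetrically, giving the additive correction $\intv{-\boldsymbol\mu}{\boldsymbol\mu}$ and hence the inflated right-hand side $\tilde{\mathbf f}(\mathbf z(t))=\mathbf f(\mathbf z(t))+\intv{-\boldsymbol\mu}{\boldsymbol\mu}$ of~\eqref{eq:new_ivp}.

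The hard part will be the weakly singular kernel $(t-\tau)^{-\nu}$. A naive difference of the two derivative expressions diverges as $t\downarrow t_k+T$, so the argument has to separate the genuinely neglected long-term memory from the fresh initialisation carried by $z_i(t_k+T)$, which the restarted Caputo problem already represents; only after this separation does a finite bound of the $(t_k+T)^{-\nu}$ type survive. Making the worst-case envelope substitution rigorous --- i.e.\ showing that the Riemann--Liouville derivative of the constant $\mathcal{Z}_i$ dominates the true history term at the reset time with exactly the constant $1/\Gamma(1-\nu)$ and exponent $-\nu$, using the monotonicity of the kernel in $\tau$ --- is where the real work lies; the passage to the symmetric interval and the componentwise assembly of $\boldsymbol\mu$ are then routine.
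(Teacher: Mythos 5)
Your overall route is the one the paper takes: identify $\boldsymbol{\mu}$ with a bound on the difference between the fractional derivative operator carrying the full memory and the one restarted at $t_k+T$, i.e.\ with the neglected memory tail, and estimate that tail through the component-wise supremum $\mathcal{Z}_i$ of $\left|z_i\right|$ rather than through $\dot z_i$. The paper's proof is essentially a one-liner that asserts exactly this bound, $\left|{}_{t_k}\mathcal{D}_t^\nu\mathbf{z}(t)-{}_{t_k+T}\mathcal{D}_t^\nu\mathbf{z}(t)\right|\le \mathcal{Z}\,T^{-\nu}/\left|\Gamma(1-\nu)\right|$, and delegates its derivation to Podlubny's short-memory principle. (Incidentally, you compare against the operator with memory start $t_0=0$, which reproduces the theorem's $(t_k+T)^{-\nu}$, whereas the paper's proof compares against the operator started at $t_k$ and obtains $T^{-\nu}$; the two coincide only for $t_k=0$, so your reading is at least as faithful to the stated formula as the paper's own.)

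The gap is that the single quantitative estimate that constitutes the theorem is precisely the step you declare to be ``where the real work lies'' and then leave open. The route you start down --- integrating the Caputo-form tail $\frac{1}{\Gamma(1-\nu)}\int(t-\tau)^{-\nu}\dot z_i(\tau)\,d\tau$ by parts --- produces the boundary term $(t-(t_k+T))^{-\nu}z_i(t_k+T)/\Gamma(1-\nu)$, which does not cancel against anything in the restarted problem and is exactly the singularity you worry about. The short-memory argument the paper invokes avoids $\dot z_i$ altogether: the difference of the two operators equals the Riemann--Liouville-type expression $\frac{1}{\Gamma(-\nu)}\int_{a}^{t_k+T}(t-\tau)^{-\nu-1}z_i(\tau)\,d\tau$ (plus an initial-value term in the Caputo convention), so that $\left|z_i\right|\le\mathcal{Z}_i$ together with the identity $\nu\left|\Gamma(-\nu)\right|=\left|\Gamma(1-\nu)\right|$ yields $\frac{\mathcal{Z}_i}{\left|\Gamma(1-\nu)\right|}\bigl[(t-(t_k+T))^{-\nu}-(t-a)^{-\nu}\bigr]$ with no boundary terms at all. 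Even this bound behaves like $(t-(t_k+T))^{-\nu}$ as $t\downarrow t_k+T$, so the clean constant $(t_k+T)^{-\nu}$ (or $T^{-\nu}$) emerges only once $t$ is at least one memory length past the restart --- a point the paper glosses over by citing Podlubny, whose estimate is stated for a moving window $t-L$ rather than a fixed restart instant. Your appeal to ``the fractional derivative of a constant evaluated at $t=t_k+T$'' reproduces the stated formula dimensionally but does not establish the inequality; as written, the decomposition and the target are right, but the core bound is asserted rather than derived.
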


\begin{proof}
	Theorem~\ref{thm:mu} is a consequence of the 
	component-wise defined error bounds for a general fractional derivative operator of a commensurate system model on the time interval $t_k+T \le t \le t_{k+1}$ that can be computed according to
	\begin{equation}\label{eq:bounds_truncation}
	\left|{}_{t_k}\mathcal{D}_t^\nu\mathbf{z}(t) - {}_{t_k+T}\mathcal{D}_t^\nu\mathbf{z}(t)\right|
	\le
	\frac{{\mathbcal{Z}} T^{-\nu}}{\left|\Gamma(1-\nu)\right|}
	=: \boldsymbol{\mu}
	\end{equation}
	In correspondence with~\cite{Podlubny}, Eq.~(\ref{eq:bounds_truncation}) relies on the component-wise defined supremum~(\ref{eq:sup_States}) of the reachable states denoted by the vector ${\mathbcal{Z}}$.
\end{proof}

\begin{theorem}[\label{thm:contractor}Contractor for the state enclosure of fractional-order systems]
	Assume that a reference solution $\mathbf{z}(t) \in \sqb{\mathbf{z}_{\mathrm{ref}}}(t)$ has already been computed for the initial value problem with the initial point of time $t=t_k$ that is valid up to the point $t = t^* > t_k+T$ 
	and that the application of Theorem~\ref{thm:MLF_encl} to the re-initialized initial value problem~(\ref{eq:new_ivp}) in Theorem~\ref{thm:mu}
	with the initial point of time $t=t_k+T$ has provided the interval bounds
	$\sqb{\mathbf{z}_e}(t) = 
	\exp\left(\left[\boldsymbol\Lambda\right]\cdot \rb{t-(t_k+T)}\right) \cdot
	\sqb{{\mathbf{z}}_{e}}\left(t_k+T\right)$ that are also valid up to $t = t^*$ 
	with the associated solution parameters $\sqb{\lambda_i}$, $i \in \{1,\ldots,n\}$. A contractor is then given by
	\begin{equation}
	\sqb{\lambda_i} := \sqb{\lambda_i} \cap \sqbs{\tilde{\lambda}_i}
	\end{equation}
	with
	\begin{equation}\label{eq:contr_lambda}
	\sqbs{\tilde{\lambda}_i} = \frac{\tilde{f}_i\rb{\sqb{\mathbf{z}_e}(\intv{t_k+T}{t^*})}
		\cap \tilde{f}_i\rb{\sqb{\mathbf{z}_{\mathrm{ref}}}(\intv{t_k+T}{t^*})}
	}{
		{\sqb{z_{e,i}}(\intv{t_k+T}{t^*})} \cap {\sqb{z_{i,\mathrm{ref}}}(\intv{t_k+T}{t^*})}
	}
	\enspace .
	\end{equation}
\end{theorem}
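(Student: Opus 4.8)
The plan is to show that intersecting $\sqb{\lambda_i}$ with $\sqbs{\tilde{\lambda}_i}$ never discards an admissible value of the solution parameter, i.e. that the genuine $\lambda_i$ attached to the exact trajectory $\mathbf{z}(t)$ of the re-initialized problem~(\ref{eq:new_ivp}) is always contained in $\sqbs{\tilde{\lambda}_i}$. Since by hypothesis it already lies in $\sqb{\lambda_i}$, this makes $\sqb{\lambda_i}\cap\sqbs{\tilde{\lambda}_i}$ an inclusion-preserving contractor, which is exactly the assertion of the theorem.

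First I would recall the quotient relation underlying the Mittag-Leffler ansatz, established in the proof of Theorem~\ref{thm:MLF_encl}: substituting the ansatz into the differential Picard formulation~(\ref{eq:exp1_fde}) and cancelling the scalar factor that multiplies $\sqb{z_{e,i}}$ yields, for the re-initialized right-hand side $\tilde{\mathbf{f}}$, the component-wise consistency condition
\[
\lambda_i = \frac{\tilde{f}_i\rb{\mathbf{z}(t)}}{z_i(t)}
\]
in the same form as the iteration~(\ref{eq:iteration_MLF}). This identity holds for the exact solution at every $t$ in the common validity range $\intv{t_k+T}{t^*}$ and is the pointwise fact that the contractor exploits.

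Second, I would invoke the two standing hypotheses: the exact trajectory is simultaneously enclosed by the reference tube $\sqb{\mathbf{z}_{\mathrm{ref}}}$ and by the freshly computed tube $\sqb{\mathbf{z}_e}$ on $\intv{t_k+T}{t^*}$. Hence $\mathbf{z}(t)$ lies in $\sqb{\mathbf{z}_e}\rb{\intv{t_k+T}{t^*}}\cap\sqb{\mathbf{z}_{\mathrm{ref}}}\rb{\intv{t_k+T}{t^*}}$ for all such $t$. Applying \emph{inclusion isotonicity} of the interval extension $\tilde{f}_i$ to the numerator, the exact value $\tilde{f}_i\rb{\mathbf{z}(t)}$ lies in $\tilde{f}_i\rb{\sqb{\mathbf{z}_e}}\cap\tilde{f}_i\rb{\sqb{\mathbf{z}_{\mathrm{ref}}}}$, while the denominator $z_i(t)$ lies in $\sqb{z_{e,i}}\cap\sqb{z_{i,\mathrm{ref}}}$. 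Forming the interval quotient of these two intersections reproduces precisely $\sqbs{\tilde{\lambda}_i}$ from~(\ref{eq:contr_lambda}), so $\lambda_i\in\sqbs{\tilde{\lambda}_i}$ and therefore $\lambda_i\in\sqb{\lambda_i}\cap\sqbs{\tilde{\lambda}_i}$, which completes the argument.

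The main obstacle is the denominator. Interval division is well defined only when $\sqb{z_{e,i}}\cap\sqb{z_{i,\mathrm{ref}}}$ excludes zero, and this is delicate here because the stabilized trajectories converge toward the origin. I would therefore state explicitly that the contractor is invoked only on subintervals where this intersection is bounded away from zero. I would also note the single source of conservatism: because interval hulls are taken over the whole horizon $\intv{t_k+T}{t^*}$ rather than matching the argument $t$ in numerator and denominator, the enclosure is guaranteed at the price of some overestimation, which is acceptable since the role of the step is contraction rather than tightness.
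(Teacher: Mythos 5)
Your argument is correct and follows the same route as the paper's own proof: both tubes are verified enclosures of the exact trajectory, so the componentwise relation from~(\ref{eq:exp1_fde}) can be evaluated on either one, intersected, and solved for $\sqbs{\tilde{\lambda}_i}$. You merely spell out what the paper leaves implicit (inclusion isotonicity of $\tilde{f}_i$ and the requirement that the denominator interval exclude zero), and that added caveat is a legitimate refinement rather than a deviation.
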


\begin{proof}
	The validity of Theorem~\ref{thm:contractor} is a direct consequence of the fact that both $\sqb{\mathbf{z}_{\mathrm{ref}}}(t)$ and $\sqb{\mathbf{z}_e}(t)$ are verified state enclosures according to 
	$\mathbf{z}(t) \in \sqb{\mathbf{z}_{\mathrm{ref}}}(t)$ and $\mathbf{z}(t) \in \sqb{\mathbf{z}_e}(t)$ and thus have to satisfy the fractional-order differential equation in the componentwise notation~(\ref{eq:exp1_fde}). Intersecting the evaluation of~(\ref{eq:exp1_fde}) for both state enclosures and solving the result for the interval of the solution parameter $\sqbs{\tilde{\lambda}_i}$ yields the relation~(\ref{eq:contr_lambda}).
\end{proof}

\section{Numerical Examples}\label{sec:example}

\subsection{Interval Bounds for Different Integration Horizons for a Nonlinear Example}\label{sec:ex_nonlin}
Consider the nonlinear example 
\begin{equation}\label{eq:example}
z^{(\nu)}(t) = p\cdot z^3(t) = p \cdot a(z(t)) \cdot z(t)
\end{equation}
with an uncertain initial state $z(0) \in \sqb{z}(0)$, the interval parameter $p \in \sqb{p}$, and the uncertain differentiation order $\nu \in \sqb{\nu}$ 
for the application of Theorem~\ref{thm:MLF_encl}. This example was already studied in~\cite{Rauh_ECC_2020} without a subdivision of the integration time horizon as well as without a restart of the integration according to Sec.~\ref{sec:discr_errors}. Due to the quasi-linear formulation of Eq.~(\ref{eq:example}), the modified iteration formula given in~(\ref{eq:MLF_mod}) becomes directly applicable.\\

\noindent
In the following, two cases with different amount of uncertainty are distinguished:
\begin{description}
	\item[Case~a:] $\sqb{z}(0) = \intv{0.99}{1.0}$,
	$\sqb{p}  = \intv{-2}{-1.99}$,
	$\sqb{\nu} = \intv{0.8}{0.81}$ and
	\item[Case~b:] $\sqb{z}(0) = \intv{0.5}{1.0}$,
	$\sqb{p}  = \intv{-2}{-1}$,
	$\sqb{\nu} = \intv{0.8}{0.9}$.
\end{description}

Fig.~\ref{fig:example} provides a comparison of the effect of reducing the integration step size $T$ according to the Theorems~\ref{thm:mu} and~\ref{thm:contractor}. In all presented results, an equidistant discretization step size was applied. Note, optimizing either an equidistant or adaptive integration step size will be one of the subjects of future work.
It can be seen that --- especially for \textbf{Case~a} --- the reduction of the integration step size leads to a significant reduction of overestimation in the computed state enclosures. This effect is not that much visible for \textbf{Case~b}, where step sizes smaller than $T = 2^{-2}$ lead to almost identical results, except for the very beginning of the integration time horizon. In both cases, it can be observed that the lower interval bounds of the state variable $z(t)$ are seemingly not influenced by overestimation. This can be concluded from the fact that a reduction of overestimation only occurs at the computed supremum when reducing the step size $T$.

\begin{figure}[htp]
	\centering
	\subfloat[{\label{fig_small1}}State enclosures for \textbf{Case~a}; integration step sizes \mbox{$T=1$} (blue),  $T=2^{-1}$ (red), $T=2^{-2}$ (yellow).]{\resizebox{0.45 \linewidth}{!}{{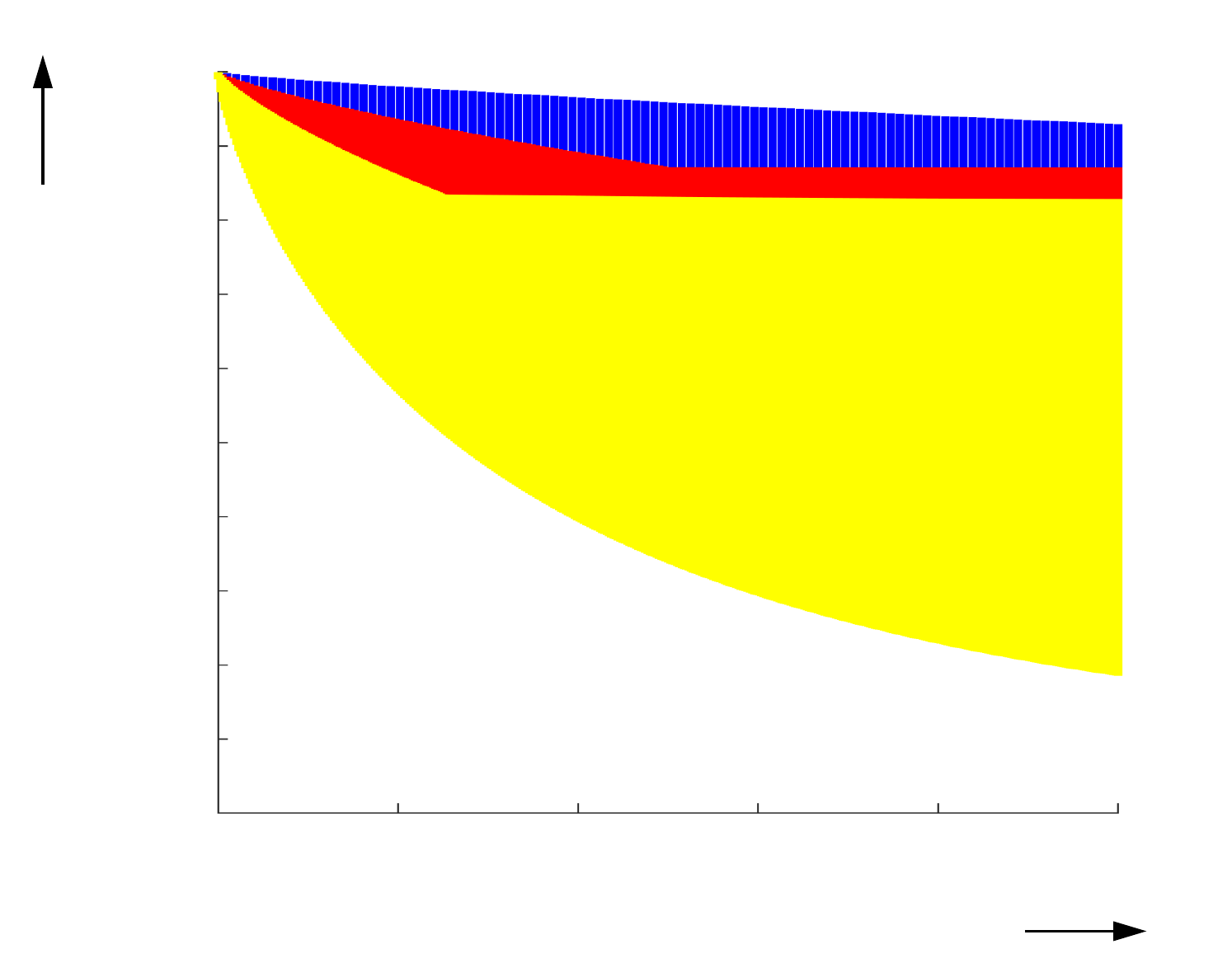}}}
	\hfill
	\subfloat[{\label{fig_large1}}State enclosures for \textbf{Case~b}; integration step sizes \mbox{$T=1$} (blue),  $T=2^{-1}$ (red), $T=2^{-2}$ (yellow).]{\resizebox{0.45 \linewidth}{!}{{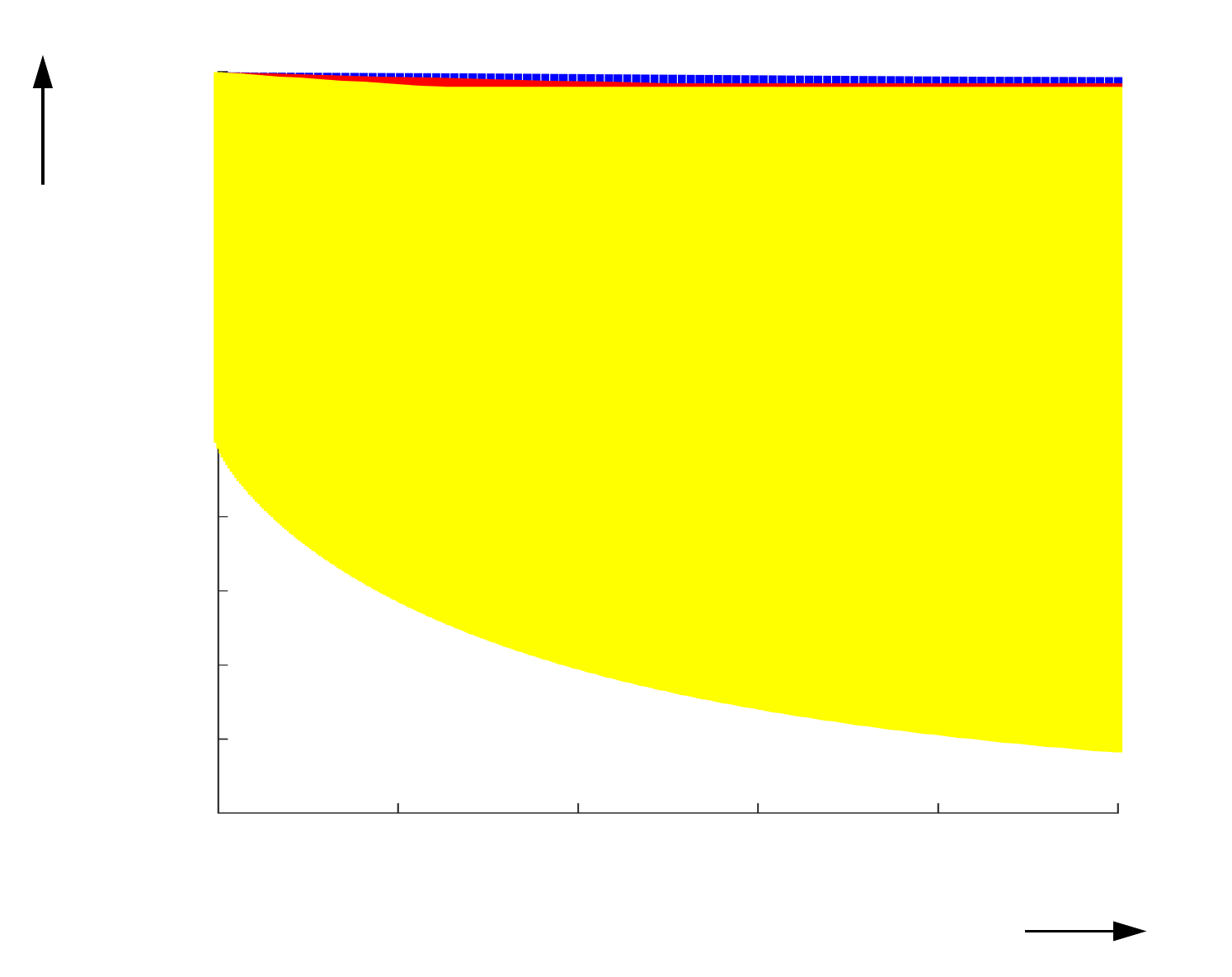}}}
	\\
	\subfloat[{\label{fig_small2}}State enclosures for \textbf{Case~a}; integration step sizes \mbox{$T=2^{-2}$} (blue),  $T=2^{-3}$ (red), $T=2^{-4}$ (yellow).]{\resizebox{0.45 \linewidth}{!}{{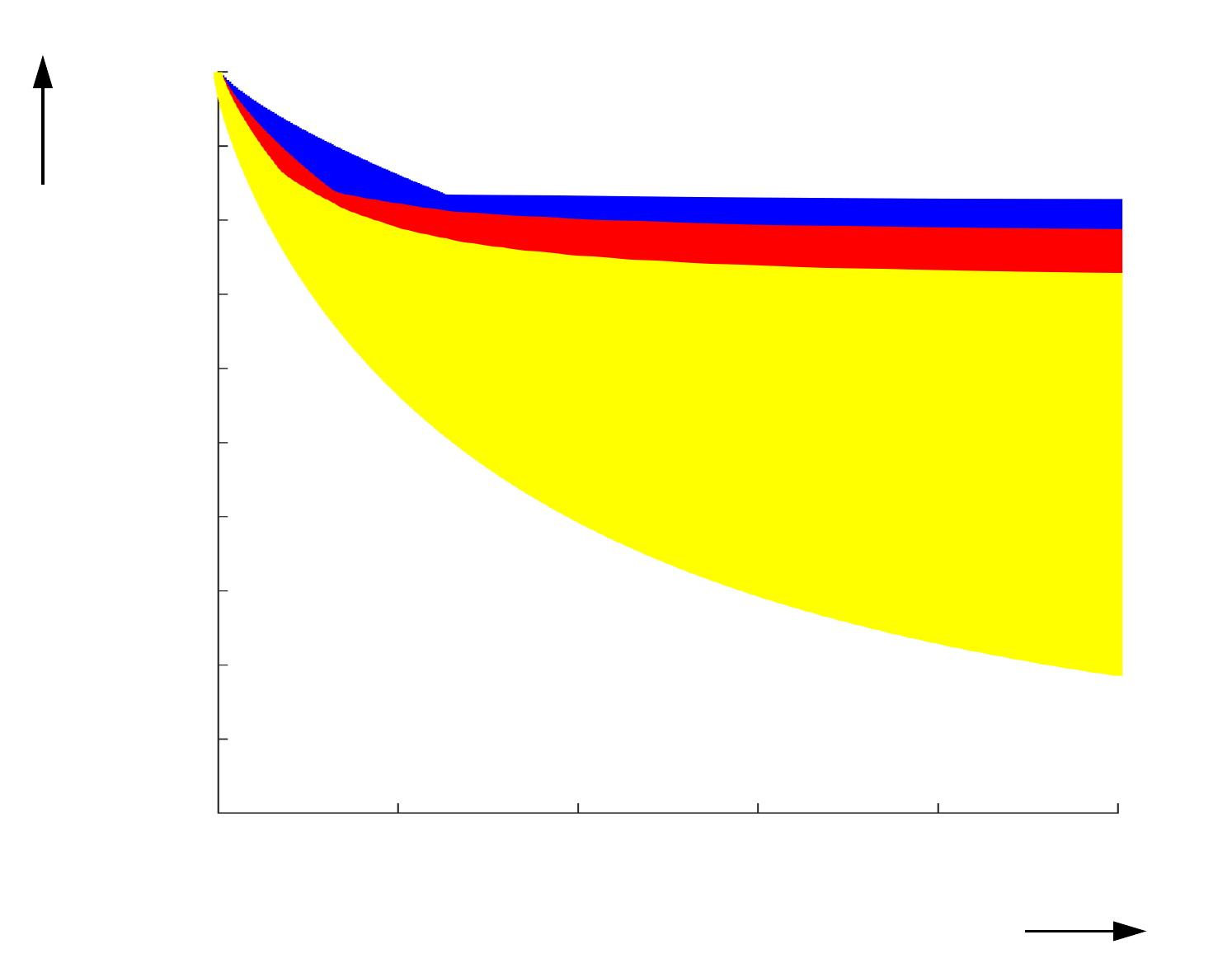}}}
	\hfill
	\subfloat[{\label{fig_large2}}State enclosures for \textbf{Case~b}; integration step sizes \mbox{$T=2^{-2}$} (blue),  $T=2^{-3}$ (red), $T=2^{-4}$ (yellow).]{\resizebox{0.45 \linewidth}{!}{{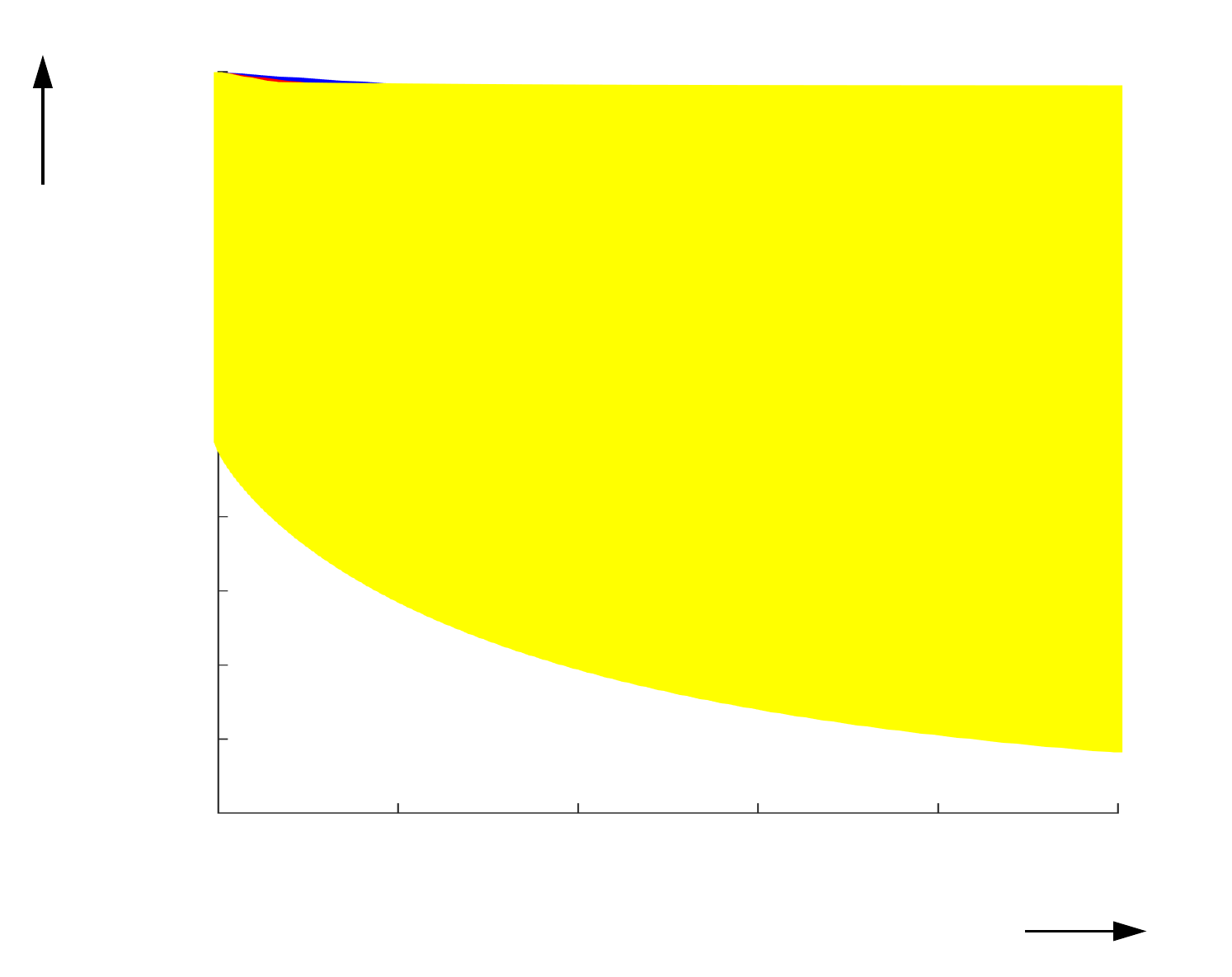}}}
	\\
	\subfloat[{\label{fig_small3}}State enclosures for \textbf{Case~a}; integration step sizes \mbox{$T=2^{-2}$} (blue),  $T=2^{-4}$ (red), $T=2^{-5}$ (yellow).]{\resizebox{0.45 \linewidth}{!}{{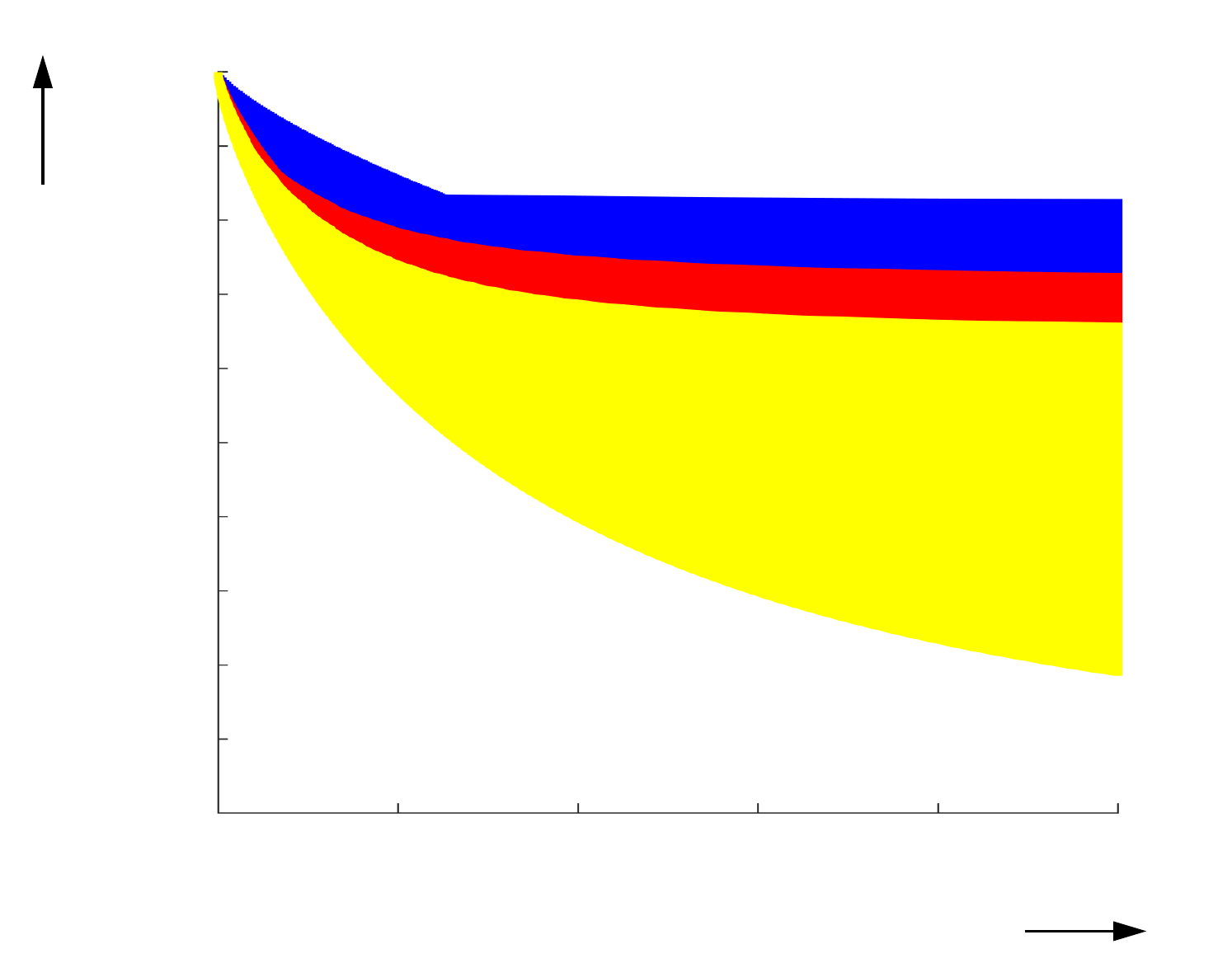}}}
	\hfill
	\subfloat[{\label{fig_large3}}State enclosures for \textbf{Case~b}; integration step sizes \mbox{$T=2^{-2}$} (blue),  $T=2^{-4}$ (red), $T=2^{-5}$ (yellow).]{\resizebox{0.45 \linewidth}{!}{{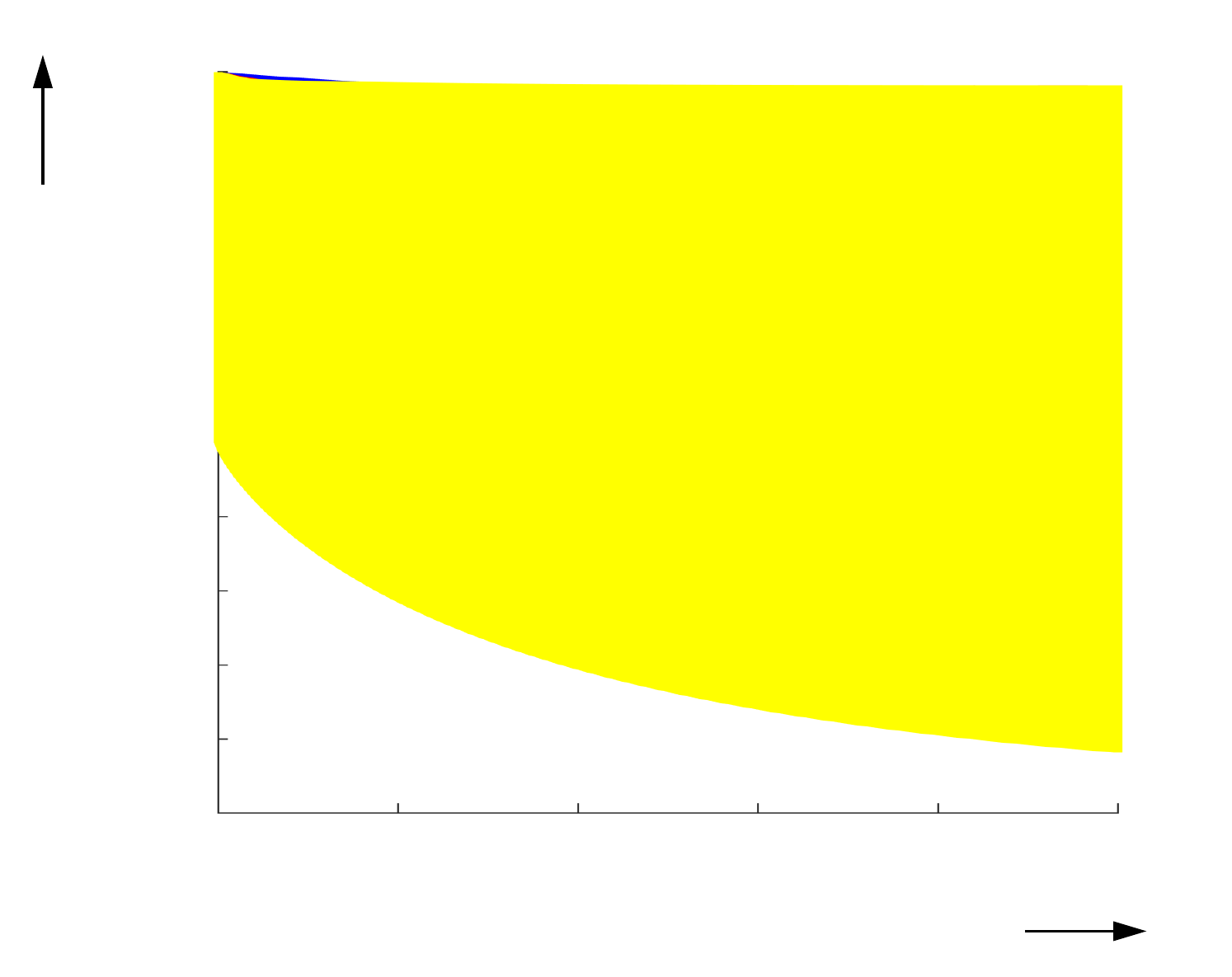}}}
	\\
	\caption{{\label{fig:example}}Guaranteed state enclosures for different discretization step sizes and levels of uncertainty. }
\end{figure}

In addition, it should be pointed out that all presented state enclosures were determined under the assumption that the parameters $\sqb{p}$ and $\sqb{\nu}$ can vary arbitrarily within their respective interval bounds over the complete integration time horizon $t \in \intv{0}{1}$. As soon as more precise knowledge about the worst-case parameter variation rates becomes available, for example, if physical reasoning justifies that these values can be set to unknown but constant quantities, the state enclosures can also be tightened in \textbf{Case~b}. Then, it becomes possible to subdivide the ranges of these parameter values and to perform individual simulations for respective subdomains, where the sets of reachable states are then characterized by the convex outer hull over all individual solution tubes.

\subsection{Simplified Fractional-Order Battery Model}
As a second application scenario, consider the fractional-order battery model depicted in Fig.~\ref{fig:battery}. It describes the dynamics of the charging and discharging behavior of a Lithium-Ion battery with the help of the state of charge $\sigma(t)$ as well as with the dynamics of the exchange of charge carriers in the interior of the battery cell which is related to electrochemical double layer effects. The latter is represented with the help of the voltage drop $v_{1}(t)$ across the fractional-order constant phase element $Q$ which serves as a generalization of an ideal capacitor according to~\cite{Rauh_IFAC_2020,ANDRE20115334,ZOU2018286}. This kind of constant phase element was already motivated by the example in Sec.~\ref{sec:lin_frac_comparison}.

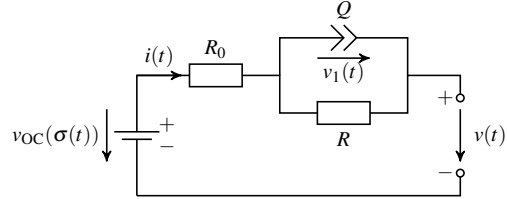
\begin{figure}[htp]
	\centering
	\begin{tikzpicture}[font=\scriptsize,scale=1] 
	\draw[semithick,white] (0,0) to (8.5cm,0);  
	\tikzmath{\rx=+0.70cm; \rh=+0.30cm;};
	\tikzmath{\qx=+0.17cm; \qy=+0.34cm; \qz=+0.17cm;};
	\tikzmath{\linx=+0.50cm; \liny=+0.50cm;};
	\tikzmath{\tex=0.20cm;};	
	\tikzmath{\arrl=+0.07cm;};	
	
	\node (vOC1) at (+0.60cm,+0.00cm) {$v_\mathrm{OC}(\sigma(t))$~~~~~~}; 
	\ExtractCoordinate{vOC1} 	\draw[semithick,->] (\xCoord+0.5cm,+0.35cm) to (\xCoord+0.5cm,-0.35cm) coordinate (vOC2);
	\ExtractCoordinate{vOC2} 	\draw[semithick](\xCoord+0.2cm, -0.05cm) to (\xCoord+0.6cm, -0.05cm) coordinate (vOC31);
	\ExtractCoordinate{vOC2} 	\draw[semithick](\xCoord+0.1cm,+0.05cm) to (\xCoord+0.7cm,+0.05cm) coordinate (vOC32);
	\ExtractCoordinate{vOC31}	\draw[semithick](\xCoord -0.2cm,\yCoord) to (\xCoord-0.2cm,\yCoord -0.75cm) coordinate(vOC4);
	\ExtractCoordinate{vOC32}	\draw[semithick](\xCoord -0.3cm,\yCoord) to (\xCoord-0.3cm,\yCoord+0.75cm) to (\xCoord+0.4cm,\yCoord+0.75cm) coordinate (R01);
	\node at (\xCoord+0.1cm,+0.15cm) {$+$};
	\node at (\xCoord+0.1cm, -0.15cm) {$-$};
	\ExtractCoordinate{R01}	\draw[semithick,->] 	(\xCoord-0.7cm,\yCoord) -- (\xCoord-0.1cm,\yCoord); 
	\node at (\xCoord -0.4cm,\yCoord+0.25cm) {$i(t)$};
	\ExtractCoordinate{R01} 	\draw[semithick]	( \xCoord					, \yCoord				) -- ( \xCoord				, \yCoord+\rh/2	) -- 
	( \xCoord+\rx			, \yCoord+\rh/2	) -- ( \xCoord+\rx		, \yCoord				) coordinate (R02) -- 
	( \xCoord+\rx			, \yCoord -\rh/2	) -- ( \xCoord				, \yCoord -\rh/2	) -- 
	( \xCoord					, \yCoord				) -- cycle; 
	\node at (\xCoord+\rx/2, \yCoord+\rh/2+\tex) {$R_{0}$};
	\ExtractCoordinate{R02} 	\draw[semithick]	( \xCoord					, \yCoord				) -- (\xCoord+\linx		, \yCoord				) -- 
	( \xCoord+\linx		, \yCoord -\liny	) -- (\xCoord+2*\linx	, \yCoord -\liny	) coordinate (R11);
	\ExtractCoordinate{R11} 	\draw[semithick]	( \xCoord					, \yCoord				) -- ( \xCoord				, \yCoord+\rh/2	) -- 
	( \xCoord+\rx			, \yCoord+\rh/2	) -- ( \xCoord+\rx		, \yCoord				) coordinate (R12) -- 
	( \xCoord+\rx			, \yCoord -\rh/2	) -- ( \xCoord				, \yCoord -\rh/2	) -- 
	( \xCoord					, \yCoord				) -- cycle; 
	\node at (\xCoord+\rx/2, \yCoord-\rh/2-\tex) {$R$};
	\ExtractCoordinate{R12} 	\draw[semithick]	( \xCoord					, \yCoord				) -- ( \xCoord+\linx		, \yCoord				) -- 
	( \xCoord+\linx		, \yCoord+\liny	) coordinate(R13) -- ( \xCoord+\linx+0.7cm	, \yCoord+\liny	) --
	( \xCoord+\linx+0.7cm	, \yCoord+\liny-0.30cm	) coordinate (R21);
	\ExtractCoordinate{R21} 	\draw[semithick,<-] (\xCoord+0.00cm, -0.35cm) -- (\xCoord+0.00cm,+0.35cm); \node at (\xCoord+0.4cm,+0.00cm) {$v(t)$};
	\draw[semithick,fill=white] (\xCoord,\yCoord) circle (0.05cm) coordinate(C1);
	\draw[semithick] ( \xCoord, \yCoord -1.00cm) -- ( \xCoord, \yCoord -1.30cm) -- (vOC4);														
	\draw[semithick,fill=white] (\xCoord,\yCoord -1.00cm) circle (0.05cm) coordinate(C2);
	\ExtractCoordinate{C1} 		\node at (\xCoord-0.20cm,\yCoord+0.00cm) {$+$};
	\ExtractCoordinate{C2} 		\node at (\xCoord-0.20cm,\yCoord -0.00cm) {$-$};
	\ExtractCoordinate{R02} 	\draw[semithick]	( \xCoord+\linx		, \yCoord				) -- ( \xCoord+\linx					, \yCoord+\liny	) -- 
	( \xCoord+\linx		, \yCoord+\liny	) -- ( \xCoord+2*\linx+\rx/2	, \yCoord+\liny	) coordinate (Q11);
	\ExtractCoordinate{Q11} 	\draw[semithick]	( \xCoord-\qz			, \yCoord-\qy/2	) -- ( \xCoord		,\yCoord) -- ( \xCoord-\qz, \yCoord+\qy/2);
	\draw[semithick]	( \xCoord-\qz+\qx	, \yCoord-\qy/2	) -- ( \xCoord+\qx,\yCoord) coordinate(Q12) -- ( \xCoord-\qz+\qx, \yCoord+\qy/2);
	\node at (\xCoord, \yCoord+\qy/2+\tex) {$Q$};
	\draw[semithick,->] ( \xCoord-0.35cm, \yCoord-\qy/2-0.10cm) -- ( \xCoord+0.35cm, \yCoord-\qy/2-0.10cm);
	\node at (\xCoord, \yCoord-\qy/2-0.3cm) {$v_{1}(t)$};														
	\ExtractCoordinate{Q12} 	\draw[semithick]	( \xCoord					, \yCoord				) -- ( \xCoord+\linx+\rx/2-\qx	, \yCoord	) -- 
	( \xCoord+\linx+\rx/2-\qx, \yCoord-\liny ) ;																											
	\end{tikzpicture}\vspace{-1mm}
	\caption{\label{fig:battery}Basic fractional-order equivalent circuit model of batteries.}
\end{figure}

For the following investigation of the proposed interval-based simulation approach for fractional-order system models, assume that the state vector
\begin{equation}
\mathbf{x}(t) = \begin{bmatrix} \sigma(t) & {}_0\fod_{t}^{0.5}\sigma(t) & v_{1}(t) \end{bmatrix}^{T}
\in \mathbb{R}^3
\end{equation}
comprises in addition to the voltage $v_1$ across the constant phase element in Fig.~\ref{fig:battery} and the state of charge also its fractional derivative of order $\nu= 0.5$. Then, following the modeling steps described in~\cite{Rauh_IFAC_2020}, where this dynamic system representation was employed for the derivation of a cooperativity-enforcing interval observer design, and generalizing the charging/discharging dynamics to 
\begin{equation}
{}_0\fod_{t}^{1}\sigma(t) = -\frac{\eta_0 \cdot i(t) + \eta_1 \cdot \sigma(t) \cdot \mathrm{sign}\rb{i(t)}}{3600 C_\mathrm{N}}
\end{equation}
 with the terminal current $i(t)$, the commensurate-order quasi-linear state equations
\begin{equation}\label{eq:state_battery} 
{}_0\fod_{t}^{0.5} \mathbf{x}(t) = \mathbcal{A}\cdot \mathbf{x}(t) + \mathbcal{b}\cdot i(t)
\end{equation}
with the system and input matrices
\begin{equation}
\mathbcal{A} = \begin{bmatrix} 0 	& \phantom{-}1 & \phantom{-} 0 \\
\frac{\eta_1 \cdot \mathrm{sign}\rb{i(t)}}{3600 C_\mathrm{N}}  & \phantom{-}0 & \phantom{-} 0 \\
0  & \phantom{-}0 & -\frac{1}{RQ} \\ \end{bmatrix}
\quad \text{and} \quad
\mathbcal{b} = \begin{bmatrix} \phantom{-}0 \\ -\frac{\eta_0}{3600 C_\mathrm{N}} \\ \phantom{-}\frac{1}{Q} \end{bmatrix}
\end{equation}
and the terminal voltage 
			\begin{equation}
v(t) = \begin{bmatrix} \sum\limits_{k=0}^{4} c_{k}\sigma^{k-1}(t) && 0 && -1 \end{bmatrix}\cdot \mathbf{x}(t) + \rb{-R_{0} +d_{0}e^{d_{1}\sigma(t)}} \cdot i(t)
\end{equation}
as the system output are obtained. 
The following numerical simulations are based on the system parameters summarized in Tab.~\ref{tab:parameter} which were --- except for $\eta_1$ --- identified according to the experimental data referenced in~\cite{reuter2016,Rauh_IFAC_2020} during multiple charging and discharging cycles before the occurrence of aging. Now, a linear state feedback controller with the structure presented in~\cite{Kersten2019646} for the discharging phase ($i(t)>0$) is parameterized by assigning the asymptotically stable eigenvalues $\lambda \in \{-0.0001; -0.0002; -0.4832\}$ to the closed-loop dynamics, where the last value corresponds to the fastest asymptotically stable dynamics of the open-loop system.

\newcolumntype{Y}{>{\centering\arraybackslash}X}
\begin{table}[hb] \captionsetup{width=.5\textwidth} \caption{Parameters of the Lithium-Ion battery model.} \label{tab:parameter} \centering
	\begin{tabularx}{0.85\textwidth}{Y | Y | Y | Y | Y | Y | Y}
		\hline
		$R_{0}\,[\mathrm{\Omega}]$	& $Q\,[\mathrm{F/s^{0.5}}]$ 	& $R\,[\mathrm{\Omega}]$	& $\eta_0	\,[-]$	& $\eta_1	\,[-]$	& $C_\mathrm{N}\,[\mathrm{Ah}]$		& $c_{0}\,[\mathrm{V}]$	\\	
		\hline
	\mbox{$+1.7 \!\cdot\! 10^{-5}$}								& $+20.591$										& $+0.1005$									& $+1.0000$	& $+0.1000$ & $+3.1000$	 							& $+3.0607$	\\
		\hline
	\end{tabularx} 

	\begin{tabularx}{0.85\textwidth}{Y | Y | Y | Y | Y | Y}
		\hline
		$c_{1}\,[\mathrm{V}]$		& $c_{2}\,[\mathrm{V}]$		& $c_{3}	\,[\mathrm{V}]$ 	& $c_{4}\,[\mathrm{V}]$		& $d_{0}\,[\mathrm{\Omega}]$ 	& $d_{1}	\,[\mathrm{-}]	$\\
		\hline
		$+3.2965$							& $-8.3942\phantom{-}$		& $+11.088$							&  $-4.8992$							& $-0.2477$									&  $-14.302$ \\
		\hline
	\end{tabularx}
\end{table}

This leads to the following closed-loop system model
\begin{equation}\label{eq:state_battery_cl} 
\fod_{t}^{0.5} \mathbf{x}(t) = \rb{\mathbcal{A}-\mathbcal{b}\mathbcal{k}^T}\cdot \mathbf{x}(t) =: \mathbcal{A}_\mathrm{C} \cdot \mathbf{x}(t) \enspace ,
\end{equation}
where the entries $\mathcal{A}_{\mathrm{C},2,1}$, $\mathcal{A}_{\mathrm{C},2,2}$, $\mathcal{A}_{\mathrm{C},2,3}$ are inflated to interval parameters by symmetric bounds of $1\%$ radius of the respective nominal quantity and $\mathcal{A}_{\mathrm{C},3,3}$ to $10\%$, respectively. The matrix of eigenvectors at the interval midpoint corresponds to the transformation matrix $\mathbf{T}$ according to Def.~\ref{def:diag_dominant}. This allows for simulating the battery model both in transformed and initial coordinates $\mathbf{z}(t)$ and $\mathbf{x}(t)$, respectively, (cf.~Defs.~\ref{def:quasi_lin} and~\ref{def:diag_dominant}) and to intersect both enclosures after transforming the model in the form~(\ref{eq:diag_dom}) back into the original state space as soon as the simulation has been completed. Note, the system matrix  $\mathbf{A} \in \sqb{\mathbf{A}}$ in the uncertain diagonally dominant model is not Metzler, so that without the additional transformation procedure presented in~\cite{Kersten2019646}, it is necessary to apply the iteration from Theorem~\ref{thm:MLF_encl}.

In Fig.~\ref{fig:battery_sim}, simulations for the following two sets of initial conditions are compared. The left column of this figure corresponds to a small uncertainty level in the initial states according to
\begin{equation}
{\mathbf{x}}(0) \in 
\begin{bmatrix}
\intv{\phantom{-}0.99000}{\phantom{-}1.01000} \\
\intv{-0.00101}{-0.00099} \\
\intv{\phantom{-}0.09900}{\phantom{-}0.10100} 
\end{bmatrix}
\quad \text{resulting in} \quad
{\mathbf{z}}(0) \in 
\begin{bmatrix}
\intv{\phantom{-}0.089139}{\phantom{-}0.091348} \\
\intv{\phantom{-}7.880509}{\phantom{-}8.120525} \\
\intv{-9.110572}{-8.890557} 
\end{bmatrix}
\enspace, 
\end{equation}
while the right column accounts for the larger uncertainties
\begin{equation}
{\mathbf{x}}(0) \in 
\begin{bmatrix}
\intv{\phantom{-}0.9000}{\phantom{-}1.1000} \\
\intv{-0.0011}{-0.0009} \\
\intv{\phantom{-}0.0900}{\phantom{-}0.1100} 
\end{bmatrix}
\quad \text{with} \quad
{\mathbf{z}}(0) \in 
\begin{bmatrix}
\intv{\phantom{-}0.079204}{\phantom{-}0.101283} \\
\intv{\phantom{-}6.800442}{\phantom{-}9.200592} \\
\intv{-10.10064}{-7.900495} 
\end{bmatrix}
\enspace .
\end{equation}

\begin{figure}[htp]
	\centering
	\subfloat[{\label{fig_small_SOC}}State of charge $\sigma(t)$.]{\resizebox{0.45 \linewidth}{!}{{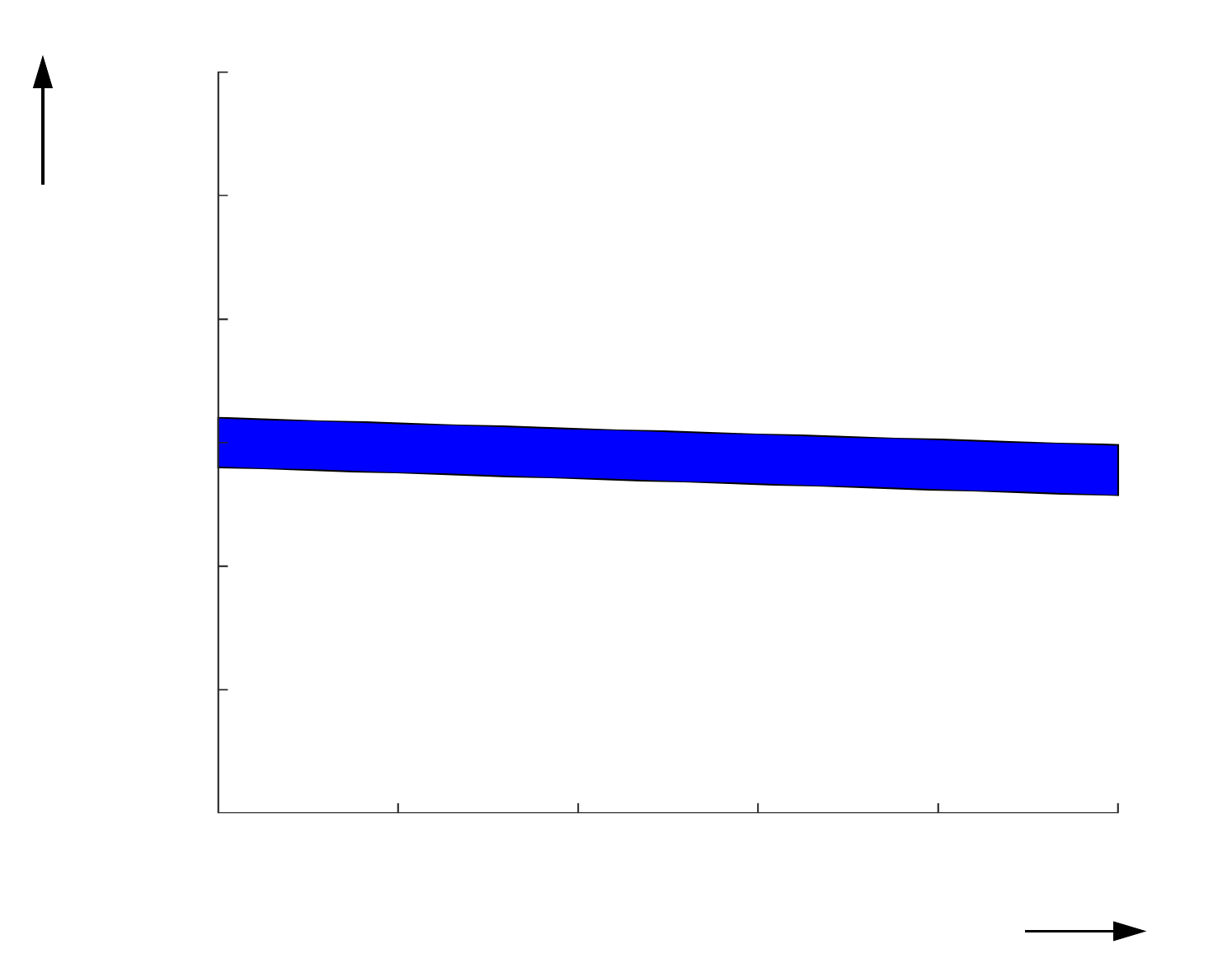}}}
	\hfill
	\subfloat[{\label{fig_large_SOC}}State of charge $\sigma(t)$.]{\resizebox{0.45 \linewidth}{!}{{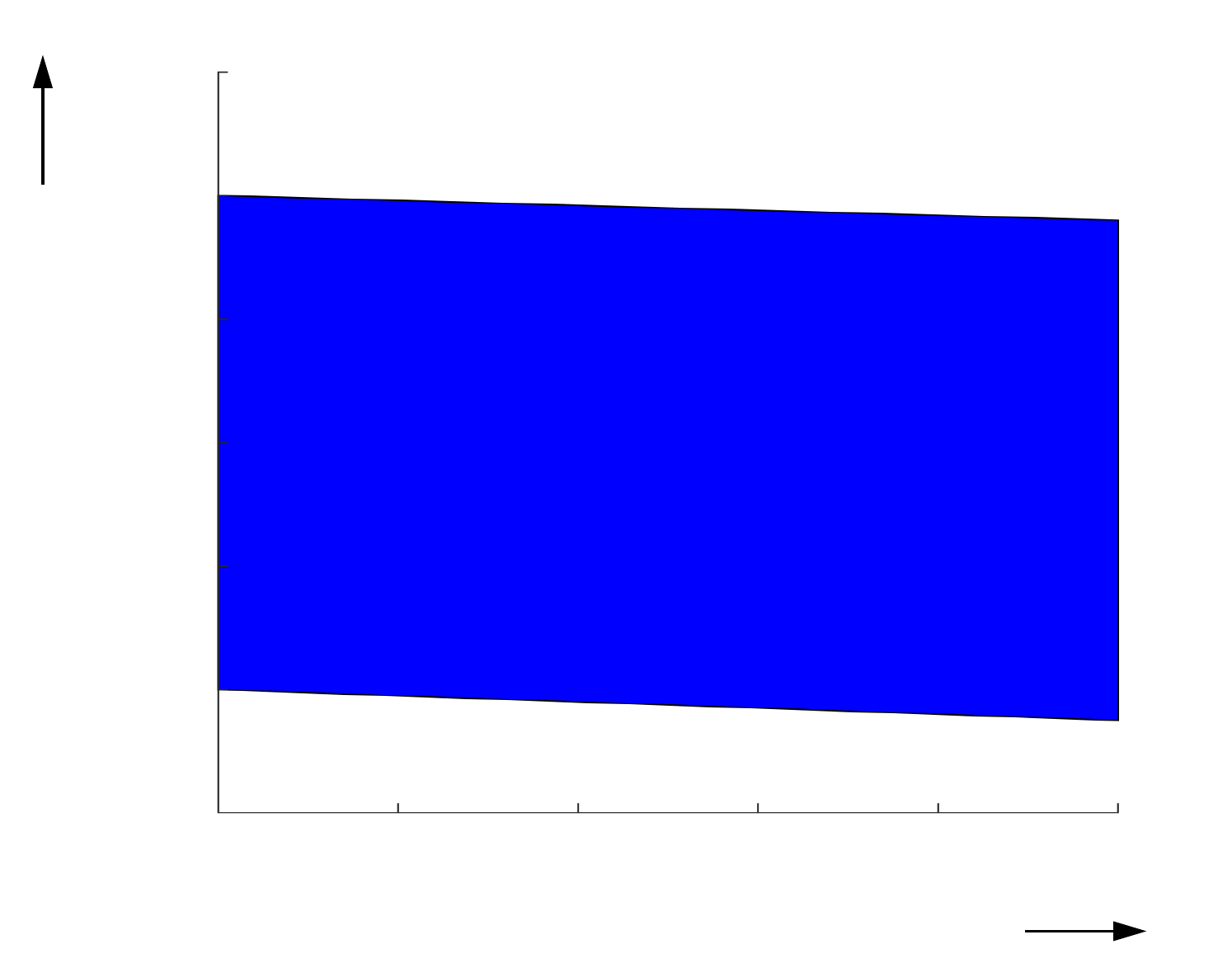}}}
	\\
	\subfloat[{\label{fig_small_CPE}}Voltage $v_1(t)$ across the constant phase element $Q$.]{\resizebox{0.45 \linewidth}{!}{{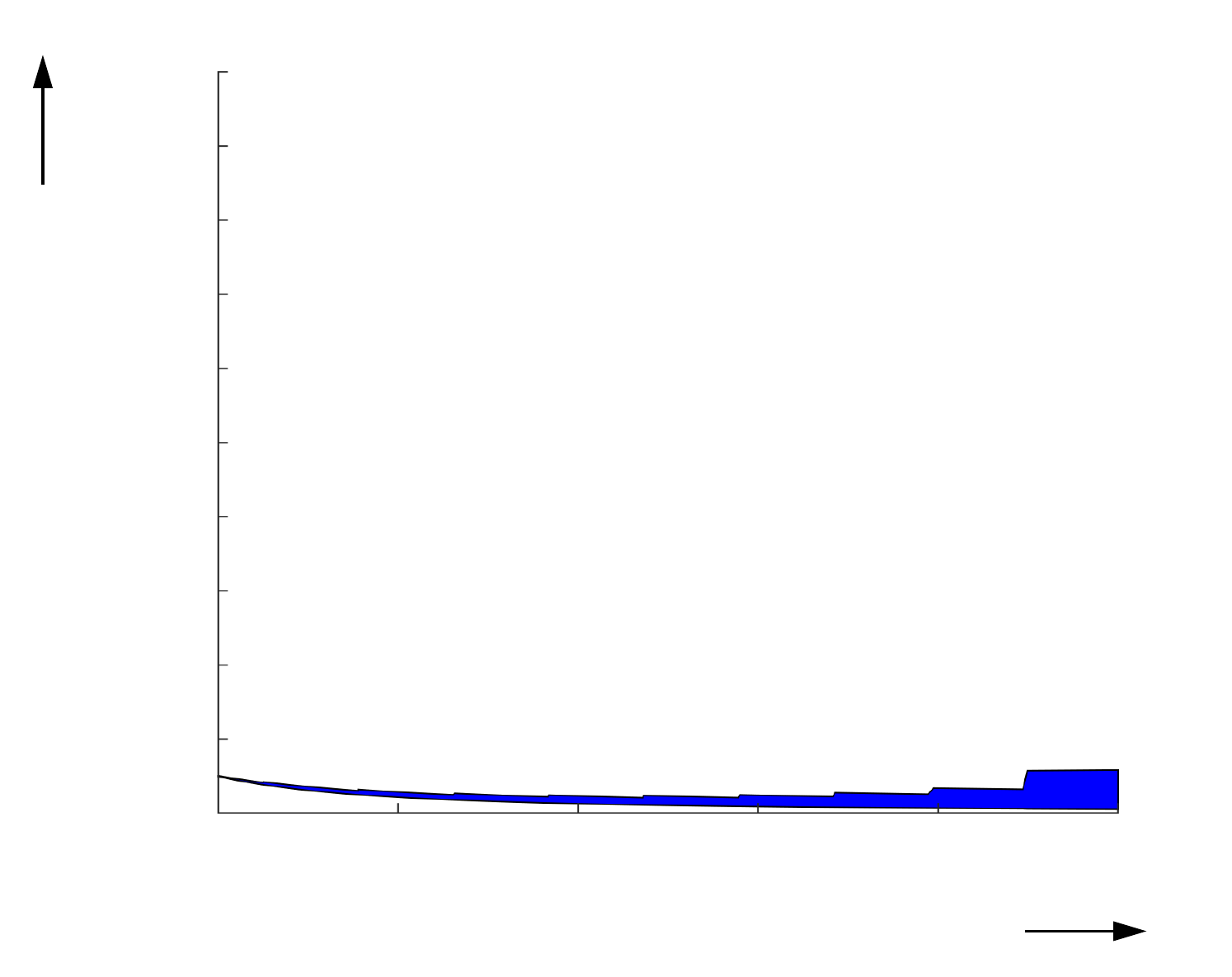}}}
	\hfill
	\subfloat[{\label{fig_large_CPE}}Voltage $v_1(t)$ across the constant phase element $Q$.]{\resizebox{0.45 \linewidth}{!}{{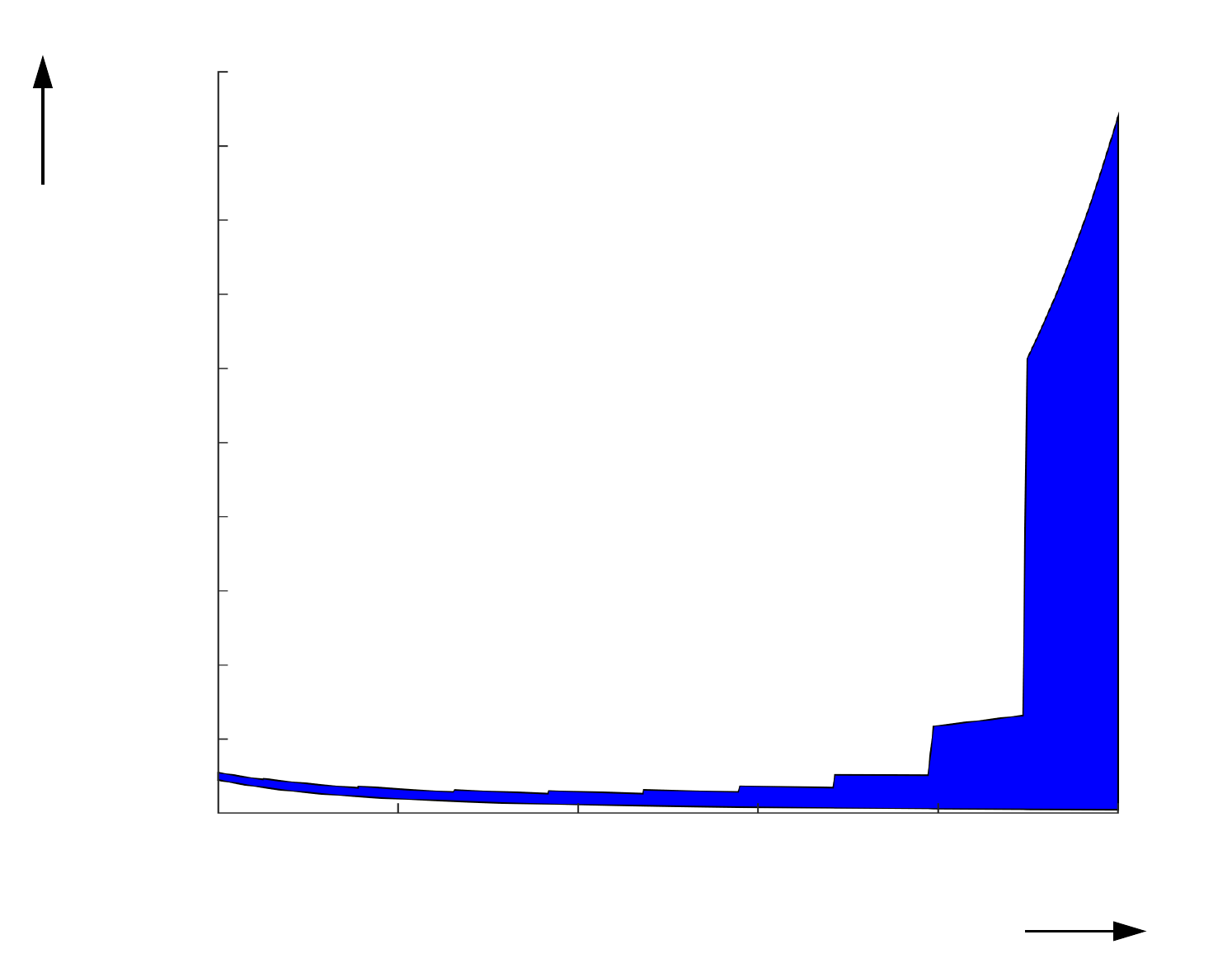}}}
	\\
	\subfloat[{\label{fig_small_out}}Battery terminal voltage $v(t)$.]{\resizebox{0.45 \linewidth}{!}{{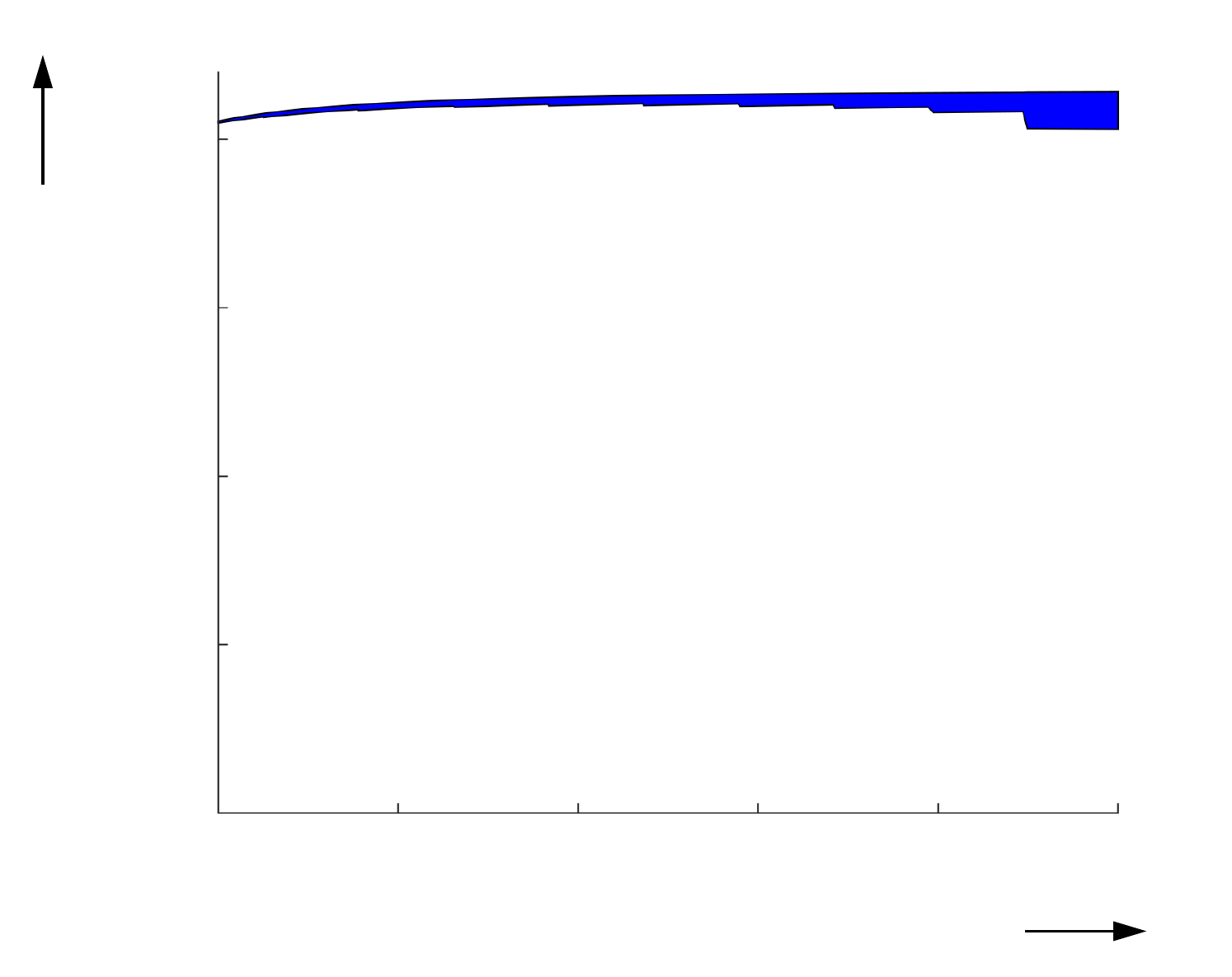}}}
	\hfill
	\subfloat[{\label{fig_large_out}}Battery terminal voltage $v(t)$.]{\resizebox{0.45 \linewidth}{!}{{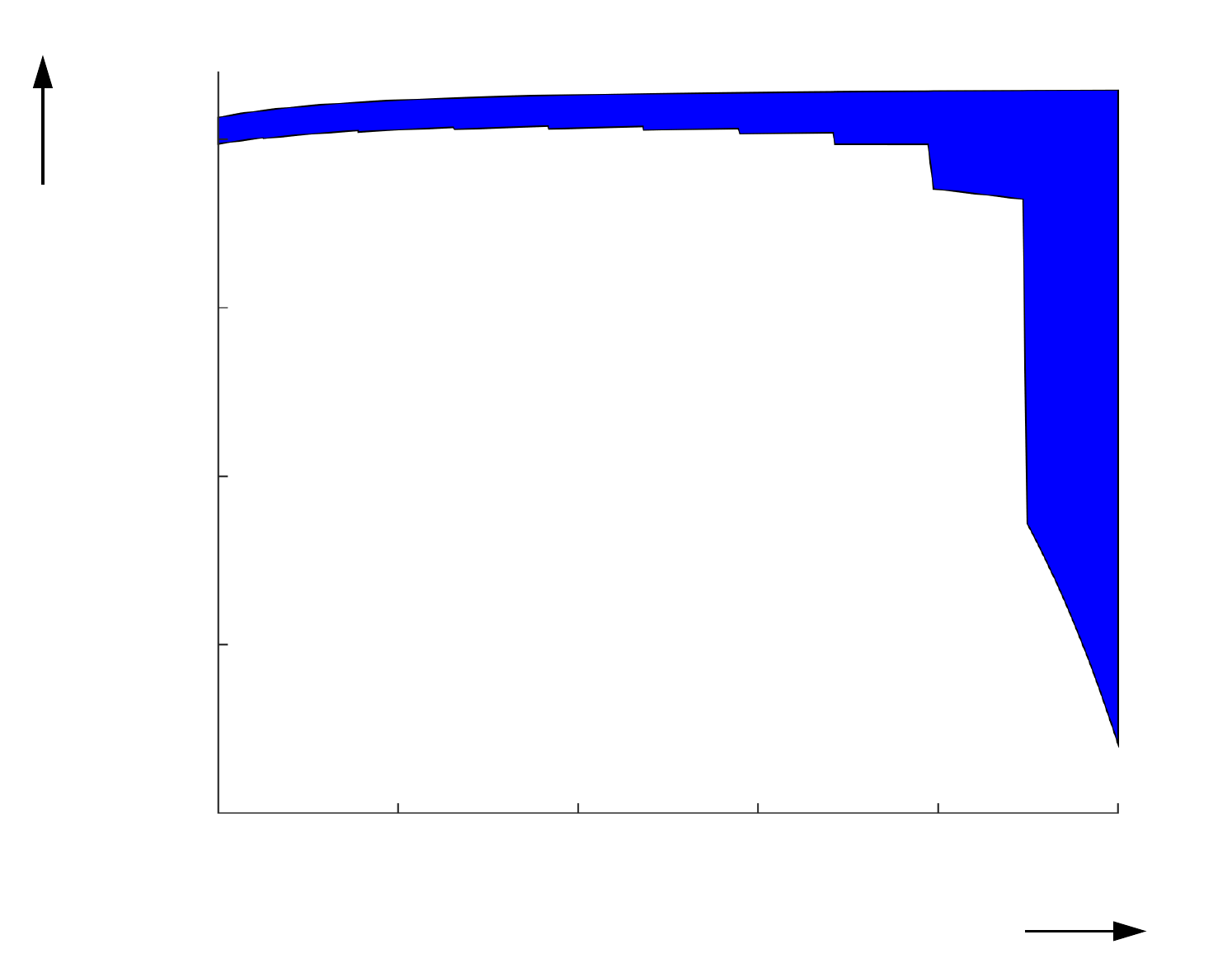}}}
	\\
	\caption{{\label{fig:battery_sim}}State and output voltage enclosures for the simplified fractional-order battery model in Fig.~\ref{fig:battery}, where the enclosures for the terminal voltage are determined with the help of an interval subdivision approach to reduce overestimation caused by the dependency effect.}
\end{figure}

Here, the simulation was performed without the subdivision of the time horizon according to Theorems~\ref{thm:mu} and~\ref{thm:contractor}. Instead, time-invariant parameter enclosures $\sqb{\lambda_i}$ were determined for 10~different time horizons from the range $t \in \intv{0}{10}\,\mathrm{s}$. The endpoints of these time intervals are visible in Fig.~\ref{fig:battery_sim} as those points at which the depicted enclosures widen in a step-like manner because parameters $\sqb{\lambda_i}$ that are valid for longer time spans  need to be wider to account for cross-coupling between multiple system states. 

In addition, Fig.~\ref{fig:battery_sim} visualizes that the use of tighter initial state intervals reduces the tendency of state enclosures to blow up over time if time-invariant bounds $\sqb{\lambda_i}$ are considered. Hence, future work for multi-dimensional systems which cannot be decoupled fully by the transformation according to Def.~\ref{def:diag_dominant}, needs to account for both an optimal step size control strategy with respective bounds for the temporal truncation errors and simultaneously bisectioning (or multisectioning) strategies of the initial state domains to counteract the wrapping effect especially in the initial phases of the simulation. There, stiff dynamics of the state equations may lead to a blow-up of the computed bounds if only a single transformation matrix $\mathbf{T}$ is employed for an approximate decoupling of the state equations. This, however, could be counteracted if multiple transformations are considered in parallel, where their corresponding intersection is included in the iteration procedure for determining the parameters $\sqb{\lambda_i}$ with a reduced amount of overestimation. Finally, it should be pointed out that the presented simulation results already provide an important step toward the design of interval-based state of charge estimation strategies by means of measured terminal voltages that can be intersected with the corresponding simulated bounds in Figs.~\ref{fig_small_out} and~\ref{fig_large_out} in a predictor--corrector framework. They can be incorporated in battery management systems together with parameter identification, predictive control, and interval observer synthesis.

\section{Conclusions and Outlook on Future Work}\label{sec:concl}
In this paper, extensions of an interval-based exponential enclosure technique originally developed for integer-order sets of ordinary differential equations were presented. They are based on describing verified state enclosures with the help of Mittag-Leffler functions. In such a way, a simulation-based reachability analysis, in the sense of computing guaranteed outer interval enclosures, becomes possible for explicit, continuous-time sets of fractional-order differential equations. Commonly, the subdivision of the integration time horizon into shorter time slices allows for a reduction of overestimation. To exploit this property despite the infinite horizon memory of fractional-order differential equations, it is essential to rigorously quantify temporal truncation errors. 

Besides the automatic choice of optimal discretization step sizes, as a compromise between computational effort and tightness of the obtained state enclosures, future work will also address systems with oscillatory behavior, where complex-valued transformations already exist for the integer-order case~\cite{Rauh_SCAN14}. 

\bibliographystyle{eptcs}
\bibliography{mech_literatur}

\begin{thebibliography}{10}
\providecommand{\bibitemdeclare}[2]{}
\providecommand{\surnamestart}{}
\providecommand{\surnameend}{}
\providecommand{\urlprefix}{Available at }
\providecommand{\url}[1]{\texttt{#1}}
\providecommand{\href}[2]{\texttt{#2}}
\providecommand{\urlalt}[2]{\href{#1}{#2}}
\providecommand{\doi}[1]{doi:\urlalt{http://dx.doi.org/#1}{#1}}
\providecommand{\bibinfo}[2]{#2}

\bibitemdeclare{article}{ANDRE20115334}
\bibitem{ANDRE20115334}
\bibinfo{author}{D.~\surnamestart Andre\surnameend},
  \bibinfo{author}{M.~\surnamestart Meiler\surnameend},
  \bibinfo{author}{K.~\surnamestart Steiner\surnameend}, \bibinfo{author}{Ch.
  \surnamestart Wimmer\surnameend}, \bibinfo{author}{T.~\surnamestart
  Soczka-Guth\surnameend} \& \bibinfo{author}{D.U. \surnamestart
  Sauer\surnameend} (\bibinfo{year}{2011}):
  \emph{\bibinfo{title}{{Characterization of High-Power Lithium-Ion Batteries
  by Electrochemical Impedance Spectroscopy. I.~Experimental Investigation}}}.
\newblock {\sl \bibinfo{journal}{Journal of Power Sources}}
  \bibinfo{volume}{196}(\bibinfo{number}{12}), pp. \bibinfo{pages}{5334--5341},
  \doi{10.1016/j.jpowsour.2010.12.102}.

\bibitemdeclare{misc}{Chen}
\bibitem{Chen}
\bibinfo{author}{Y.~\surnamestart Chen\surnameend}:
  \emph{\bibinfo{title}{{Oustaloup-Recursive-Approximation for Fractional Order
  Differentiators}}}.
\newblock \bibinfo{howpublished}{MATLAB Central File Exchange}.
\newblock
  \urlprefix\url{www.mathworks.com/matlabcentral/fileexchange/3802-oustaloup-recursive-approximation-for-fractional-order-differentiators}.
\newblock \bibinfo{note}{Accessed: Aug.~14, 2020}.

\bibitemdeclare{article}{Dorjgotov}
\bibitem{Dorjgotov}
\bibinfo{author}{K.~\surnamestart Dorjgotov\surnameend},
  \bibinfo{author}{H.~\surnamestart Ochiai\surnameend} \&
  \bibinfo{author}{U.~\surnamestart Zunderiya\surnameend}
  (\bibinfo{year}{2018}): \emph{\bibinfo{title}{{On Solutions of Linear
  Fractional Differential Equations and Systems Thereof}}}.
\newblock \bibinfo{note}{ArXiv:1803.09063}.

\bibitemdeclare{inproceedings}{Khazali}
\bibitem{Khazali}
\bibinfo{author}{R.~\surnamestart El-Khazali\surnameend}, \bibinfo{author}{I.M.
  \surnamestart Batiha\surnameend} \& \bibinfo{author}{S.~\surnamestart
  Momani\surnameend} (\bibinfo{year}{2019}):
  \emph{\bibinfo{title}{Approximation of Fractional-Order Operators}}.
\newblock In \bibinfo{editor}{P.~\surnamestart Agarwal\surnameend},
  \bibinfo{editor}{D.~\surnamestart Baleanu\surnameend},
  \bibinfo{editor}{Y.~\surnamestart Chen\surnameend},
  \bibinfo{editor}{S.~\surnamestart Momani\surnameend} \&
  \bibinfo{editor}{J.A.T. \surnamestart Machado\surnameend}, editors: {\sl
  \bibinfo{booktitle}{Fractional Calculus. ICFDA 2018. Springer Proceedings in
  Mathematics \& Statistics, vol.~303}}, \bibinfo{publisher}{Springer
  Singapore}, \bibinfo{address}{Singapore}, pp. \bibinfo{pages}{121--151},
  \doi{10.1007/978-981-15-0430-3\_8}.

\bibitemdeclare{article}{Garrappa_2015}
\bibitem{Garrappa_2015}
\bibinfo{author}{R.~\surnamestart Garrappa\surnameend} (\bibinfo{year}{2015}):
  \emph{\bibinfo{title}{{Numerical Evaluation of Two and Three Parameter
  Mittag-Leffler Functions}}}.
\newblock {\sl \bibinfo{journal}{SIAM Journal on Numerical Analysis}}
  \bibinfo{volume}{53}(\bibinfo{number}{3}), pp. \bibinfo{pages}{1350--1369},
  \doi{10.1137/140971191}.

\bibitemdeclare{article}{Ghosh}
\bibitem{Ghosh}
\bibinfo{author}{U.~\surnamestart Ghosh\surnameend},
  \bibinfo{author}{S.~\surnamestart Sarkar\surnameend} \&
  \bibinfo{author}{S.~\surnamestart Das\surnameend} (\bibinfo{year}{2015}):
  \emph{\bibinfo{title}{{Solution of System of Linear Fractional Differential
  Equations with Modified Derivative of Jumarie Type}}}.
\newblock {\sl \bibinfo{journal}{American Journal of Mathematical Analysis}}
  \bibinfo{volume}{3}(\bibinfo{number}{3}), pp. \bibinfo{pages}{72--84},
  \doi{10.12691/ajma-3-3-3}.

\bibitemdeclare{book}{Gorenflo:2014}
\bibitem{Gorenflo:2014}
\bibinfo{author}{R.~\surnamestart Gorenflo\surnameend}, \bibinfo{author}{A.A.
  \surnamestart Kilbas\surnameend}, \bibinfo{author}{F.~\surnamestart
  Mainardi\surnameend} \& \bibinfo{author}{S.V. \surnamestart
  Rogosin\surnameend} (\bibinfo{year}{2014}):
  \emph{\bibinfo{title}{{Mittag-Leffler Functions, Related Topics and
  Applications}}}.
\newblock \bibinfo{publisher}{Springer--Verlag}, \bibinfo{address}{Berlin,
  Heidelberg}, \doi{10.1007/978-3-662-43930-2}.

\bibitemdeclare{article}{Gorenflo_2002}
\bibitem{Gorenflo_2002}
\bibinfo{author}{R.~\surnamestart Gorenflo\surnameend},
  \bibinfo{author}{J.~\surnamestart Loutchko\surnameend} \&
  \bibinfo{author}{Y.~\surnamestart Luchko\surnameend} (\bibinfo{year}{2002}):
  \emph{\bibinfo{title}{{Computation of the Mittag-Leffler Function and its
  Derivatives}}}.
\newblock {\sl \bibinfo{journal}{Fractional Calculus \& Applied Analysis
  (FCAA)}} \bibinfo{volume}{5}(\bibinfo{number}{4}), pp.
  \bibinfo{pages}{491--518}.

\bibitemdeclare{article}{Haubold}
\bibitem{Haubold}
\bibinfo{author}{H.J. \surnamestart Haubold\surnameend}, \bibinfo{author}{A.M.
  \surnamestart Mathai\surnameend} \& \bibinfo{author}{R.K. \surnamestart
  Saxena\surnameend} (\bibinfo{year}{2011}):
  \emph{\bibinfo{title}{{Mittag-Leffler Functions and Their Applications}}}.
\newblock {\sl \bibinfo{journal}{Journal of Applied Mathematics}}
  \bibinfo{volume}{2011, 51 pages}, \doi{10.1155/2011/298628}.

\bibitemdeclare{inproceedings}{Rauh_IFAC_2020}
\bibitem{Rauh_IFAC_2020}
\bibinfo{author}{E.~\surnamestart Hildebrandt\surnameend},
  \bibinfo{author}{J.~\surnamestart Kersten\surnameend},
  \bibinfo{author}{A.~\surnamestart Rauh\surnameend} \&
  \bibinfo{author}{H.~\surnamestart Aschemann\surnameend}
  (\bibinfo{year}{2020}): \emph{\bibinfo{title}{{Robust Interval Observer
  Design for Fractional-Order Models with Applications to State Estimation of
  Batteries}}}.
\newblock In: {\sl \bibinfo{booktitle}{Proc.~of the 21st IFAC World Congress}},
  \bibinfo{address}{Berlin, Germany}.

\bibitemdeclare{inproceedings}{Kersten2019646}
\bibitem{Kersten2019646}
\bibinfo{author}{J.~\surnamestart Kersten\surnameend},
  \bibinfo{author}{A.~\surnamestart Rauh\surnameend} \&
  \bibinfo{author}{H.~\surnamestart Aschemann\surnameend}
  (\bibinfo{year}{2019}): \emph{\bibinfo{title}{{Transformation of Uncertain
  Linear Fractional Order Differential Equations Into a Cooperative Form}}}.
\newblock In: {\sl \bibinfo{booktitle}{Proc.~of 11th IFAC Symposium on
  Nonlinear Control Systems, NOLCOS 2019}}, \bibinfo{address}{Vienna, Austria},
  \doi{10.1016/j.ifacol.2019.12.035}.

\bibitemdeclare{article}{Lyons}
\bibitem{Lyons}
\bibinfo{author}{R.~\surnamestart Lyons\surnameend}, \bibinfo{author}{A.S.
  \surnamestart Vatsala\surnameend} \& \bibinfo{author}{R.~\surnamestart
  Chiquet\surnameend} (\bibinfo{year}{2017}): \emph{\bibinfo{title}{{Picard’s
  Iterative Method for Caputo Fractional Differential Equations with Numerical
  Results}}}.
\newblock {\sl \bibinfo{journal}{Mathematics}}
  \bibinfo{volume}{5}(\bibinfo{number}{4}), \doi{10.3390/math5040065}.

\bibitemdeclare{inproceedings}{MALTI2015769}
\bibitem{MALTI2015769}
\bibinfo{author}{R.~\surnamestart Malti\surnameend} \&
  \bibinfo{author}{S.~\surnamestart Victor\surnameend} (\bibinfo{year}{2015}):
  \emph{\bibinfo{title}{{CRONE Toolbox for System Identification Using
  Fractional Differentiation Models}}}.
\newblock In: {\sl \bibinfo{booktitle}{Proc.~of 17th IFAC Symposium on System
  Identification SYSID 2015}}, \bibinfo{volume}{48}, pp.
  \bibinfo{pages}{769--774}, \doi{10.1016/j.ifacol.2015.12.223}.

\bibitemdeclare{inproceedings}{Ned:2006}
\bibitem{Ned:2006}
\bibinfo{author}{N.S. \surnamestart Nedialkov\surnameend}
  (\bibinfo{year}{2007}): \emph{\bibinfo{title}{{Interval Tools for ODEs and
  DAEs}}}.
\newblock In: {\sl \bibinfo{booktitle}{{CD-Proc.~of 12th GAMM-IMACS
  Intl.~Symposium on Scientific Computing, Computer Arithmetic, and Validated
  Numerics SCAN 2006}}}, \bibinfo{publisher}{IEEE Computer Society},
  \bibinfo{address}{Duisburg, Germany}, \doi{10.1109/SCAN.2006.28}.

\bibitemdeclare{book}{Oustaloup}
\bibitem{Oustaloup}
\bibinfo{author}{A.~\surnamestart Oustaloup\surnameend} (\bibinfo{year}{1995}):
  \emph{\bibinfo{title}{{La D\'erivation Non Enti\`ere: Th\'eorie, Synth\`ese
  et Applications}}}.
\newblock \bibinfo{publisher}{Herm\`es}, \bibinfo{address}{Paris}.
\newblock \bibinfo{note}{In French}.

\bibitemdeclare{article}{817385}
\bibitem{817385}
\bibinfo{author}{A.~\surnamestart {Oustaloup}\surnameend},
  \bibinfo{author}{F.~\surnamestart {Levron}\surnameend},
  \bibinfo{author}{B.~\surnamestart {Mathieu}\surnameend} \&
  \bibinfo{author}{F.~M. \surnamestart {Nanot}\surnameend}
  (\bibinfo{year}{2000}): \emph{\bibinfo{title}{{Frequency-Band Complex
  Noninteger Differentiator: Characterization and Synthesis}}}.
\newblock {\sl \bibinfo{journal}{IEEE Transactions on Circuits and Systems I:
  Fundamental Theory and Applications}}
  \bibinfo{volume}{47}(\bibinfo{number}{1}), pp. \bibinfo{pages}{25--39},
  \doi{10.1109/81.817385}.

\bibitemdeclare{book}{Podlubny}
\bibitem{Podlubny}
\bibinfo{author}{I.~\surnamestart Podlubny\surnameend} (\bibinfo{year}{1999}):
  \emph{\bibinfo{title}{{Fractional Differential Equations: An Introduction to
  Fractional Derivatives, Fractional Differential Equations, to Methods of
  Their Solution and Some of Their Applications}}}.
\newblock \bibinfo{series}{Mathematics in Science and Engineering},
  \bibinfo{publisher}{Academic Press}, \bibinfo{address}{London},
  \doi{10.1016/s0076-5392(99)x8001-5}.

\bibitemdeclare{article}{Rauh_Acta_2020}
\bibitem{Rauh_Acta_2020}
\bibinfo{author}{A.~\surnamestart Rauh\surnameend} \&
  \bibinfo{author}{J.~\surnamestart Kersten\surnameend} (\bibinfo{year}{2020}):
  \emph{\bibinfo{title}{{Toward the Development of Iteration Procedures for the
  Interval-Based Simulation of Fractional-Order Systems}}}.
\newblock {\sl \bibinfo{journal}{Acta Cybernetica}},
  \doi{10.14232/actacyb.285660}.

\bibitemdeclare{inproceedings}{Rauh_mmar_2019}
\bibitem{Rauh_mmar_2019}
\bibinfo{author}{A.~\surnamestart Rauh\surnameend},
  \bibinfo{author}{J.~\surnamestart Kersten\surnameend} \&
  \bibinfo{author}{H.~\surnamestart Aschemann\surnameend}
  (\bibinfo{year}{2019}): \emph{\bibinfo{title}{{Techniques for Verified
  Reachability Analysis of Quasi-Linear Continuous-Time Systems}}}.
\newblock In: {\sl \bibinfo{booktitle}{Proc.~of 24th Intl.~Conference on
  Methods and Models in Automation and Robotics}},
  \bibinfo{address}{Miedzyzdroje, Poland}, \doi{10.1109/MMAR.2019.8864648}.

\bibitemdeclare{inproceedings}{Rauh_ECC_2020}
\bibitem{Rauh_ECC_2020}
\bibinfo{author}{A.~\surnamestart Rauh\surnameend},
  \bibinfo{author}{J.~\surnamestart Kersten\surnameend} \&
  \bibinfo{author}{H.~\surnamestart Aschemann\surnameend}
  (\bibinfo{year}{2020}): \emph{\bibinfo{title}{{Interval-Based Verification
  Techniques for the Analysis of Uncertain Fractional-Order System Models}}}.
\newblock In: {\sl \bibinfo{booktitle}{Proc.~of the 18th European Control
  Conference ECC2020}}, \bibinfo{address}{St.~Petersburg, Russia},
  \doi{10.23919/ECC51009.2020.9143758}.

\bibitemdeclare{inproceedings}{Rauh_SCAN14}
\bibitem{Rauh_SCAN14}
\bibinfo{author}{A.~\surnamestart Rauh\surnameend},
  \bibinfo{author}{R.~\surnamestart Westphal\surnameend},
  \bibinfo{author}{H.~\surnamestart Aschemann\surnameend} \&
  \bibinfo{author}{E.~\surnamestart Auer\surnameend} (\bibinfo{year}{2016}):
  \emph{\bibinfo{title}{{Exponential Enclosure Techniques for Initial Value
  Problems with Multiple Conjugate Complex Eigenvalues}}}.
\newblock In \bibinfo{editor}{M.~\surnamestart Nehmeier\surnameend},
  \bibinfo{editor}{J.~Wolff \surnamestart von Gudenberg\surnameend} \&
  \bibinfo{editor}{W.~\surnamestart Tucker\surnameend}, editors: {\sl
  \bibinfo{booktitle}{Scientific Computing, Computer Arithmetic, and Validated
  Numerics}}, \bibinfo{publisher}{Springer International Publishing},
  \bibinfo{address}{Cham}, pp. \bibinfo{pages}{247--256},
  \doi{10.1007/978-3-319-31769-4\_20}.

\bibitemdeclare{inproceedings}{reuter2016}
\bibitem{reuter2016}
\bibinfo{author}{J.~\surnamestart Reuter\surnameend},
  \bibinfo{author}{E.~\surnamestart Mank\surnameend},
  \bibinfo{author}{H.~\surnamestart Aschemann\surnameend} \&
  \bibinfo{author}{A.~\surnamestart Rauh\surnameend} (\bibinfo{year}{2016}):
  \emph{\bibinfo{title}{{Battery State Observation and Condition Monitoring
  Using Online Minimization}}}.
\newblock In: {\sl \bibinfo{booktitle}{Proc.~of 21st Intl.~Conference on
  Methods and Models in Automation and Robotics}},
  \bibinfo{address}{Miedzyzdroje, Poland}, \doi{10.1109/MMAR.2016.7575313}.

\bibitemdeclare{article}{wang}
\bibitem{wang}
\bibinfo{author}{B.~\surnamestart Wang\surnameend},
  \bibinfo{author}{Z.~\surnamestart Liu\surnameend},
  \bibinfo{author}{S.~\surnamestart Li\surnameend},
  \bibinfo{author}{S.~\surnamestart Moura\surnameend} \&
  \bibinfo{author}{H.~\surnamestart Peng\surnameend} (\bibinfo{year}{2017}):
  \emph{\bibinfo{title}{{State-of-Charge Estimation for Lithium-Ion Batteries
  Based on a Nonlinear Fractional Model}}}.
\newblock {\sl \bibinfo{journal}{IEEE Trans. on Control Systems Technology}}
  \bibinfo{volume}{25}(\bibinfo{number}{1}), pp. \bibinfo{pages}{3--11},
  \doi{10.1109/TCST.2016.2557221}.

\bibitemdeclare{article}{ZOU2018286}
\bibitem{ZOU2018286}
\bibinfo{author}{Ch. \surnamestart Zou\surnameend},
  \bibinfo{author}{L.~\surnamestart Zhang\surnameend},
  \bibinfo{author}{X.~\surnamestart Hu\surnameend},
  \bibinfo{author}{Z.~\surnamestart Wang\surnameend},
  \bibinfo{author}{T.~\surnamestart Wik\surnameend} \&
  \bibinfo{author}{M.~\surnamestart Pecht\surnameend} (\bibinfo{year}{2018}):
  \emph{\bibinfo{title}{{A Review of Fractional-Order Techniques Applied to
  Lithium-Ion Batteries, Lead-Acid Batteries, and Supercapacitors}}}.
\newblock {\sl \bibinfo{journal}{Journal of Power Sources}}
  \bibinfo{volume}{390}, pp. \bibinfo{pages}{286--296},
  \doi{10.1016/j.jpowsour.2018.04.033}.

\end{thebibliography}
\end{document}